 \theoremstyle{plain}
\newtheorem{thm}{Theorem}[section]
  \theoremstyle{plain}
 \theoremstyle{definition}
  \theoremstyle{definition}
  \newtheorem{defn}[thm]{Definition}
  \theoremstyle{definition}
  \newtheorem{rk}[thm]{Remark}
  \theoremstyle{definition}
  \theoremstyle{plain}
  \theoremstyle{plain}
  \newtheorem{lem}[thm]{Lemma}
    \theoremstyle{plain}
     \theoremstyle{plain}
     \theoremstyle{definition}
\renewcommand{\epsilon}{\varepsilon}
\newcommand{\D}{\mathcal{D}}
\newcommand{\F}{\mathcal{F}}
\renewcommand{\H}{\mathcal{H}}
\newcommand{\M}{\mathcal{M}}
\renewcommand{\P}{\mathcal{P}}
\newcommand{\R}{\mathbb{R}}
\newcommand{\V}{\mathcal{V}}
\renewcommand{\div}{\text{div}}
\newcommand{\norm}[1]{{\left|\left|#1\right|\right|}}
\title{Some Results for the Primitive Equations with Physical Boundary Conditions}
\author{Lawrence Christopher Evans\footnote{Department of Mathematics, University of Missouri-Columbia. E-mail: evanslc@missouri.edu}, Robert Gastler\footnote{Department of Mathematics, University of Missouri-Columbia. E-mail: rrgm8d@mail.missouri.edu}}
\begin{document}

\maketitle

\abstract{In this paper we consider the (simplified) 3-dimensional primitive equations with \emph{physical boundary conditions}. We show that the equations with constant forcing have a bounded absorbing ball in the $H^1$-norm and that a solution to the unforced equations has its $H^1$-norm decay to $0$. From this, we argue that there exists an invariant measure (on $H^1$) for the equations under random kick-forcing.}

\section{Introduction and Statement of Results}

We consider the 3-dimensional primitive equations, a variant of the Navier Stokes equations in which the equation for the third component of velocity is removed and we make the assumption that the pressure $p$ is independent of the third space coordinate. In this paper we will, following the presentation of~\cite{KZ}, consider the following simplified version of the primitive equations:
\begin{equation}
\label{thePEsInitial}
\begin{cases}
\partial_t u_k-\nu\Delta u_k + \sum_{j=1}^3 u_j\partial_j u_k+ \partial_k p = \text{Forcing},\ \ k=1,2\\
 \div\ u= \partial_1 u_1+\partial_2 u_2+\partial_3 u_3 = 0\\
\end{cases}.
\end{equation}
We consider both the case of constant forcing, and forcing by random kicks.

In their breakthrough paper~\cite{CT}, Cao and Titi proved the existence of global strong solutions for the 3-dimensional primitive equations. Later, in~\cite{KZ}, Kukavica and Ziane proved the existence of global strong solutions under a different set of boundary conditions, which correspond more closely to physical models of the ocean and which we will refer to as the \emph{physical boundary conditions} (see (\ref{BCs}) below). The physical boundary conditions lead to different estimates\footnote{In particular, the equality $\int_M \nabla p_s\cdot\Delta \overline{v} dxdy=0$ in the beginning of Section 3.3.1 of~\cite{CT} does not hold for the physical boundary conditions.} and so Kukavica and Ziane give a proof that is substantially different from the one in~\cite{CT}.

Following the breakthrough of Cao and Titi, many results have been proved assuming their boundary conditions: In~\cite{Ju}, Ju considers bounded absorbing sets and global attractors. In~\cite{GH}, Gao and Huang consider the stochastically forced primitive equations and show the existence of random pullback attractors. More recently, in~\cite{DGTZ}, Debussche, Glatt-Holtz, Temam, and Ziane have proved the global existence of strong pathwise solutions to the primitive equations with forcing by multiplicative noise. Few papers, however, have considered the physical boundary conditions. 

In this paper we consider the 3-dimensional primitive equations with the physical boundary conditions. We show that under a constant forcing, the $V$-norm ($H^1$-norm) of the solution stays bounded and that under no forcing, the $V$-norm of the solution decays to $0$. These results are stated as Theorem \ref{goalTheorem} and Theorem \ref{absorbingforkicks}. Such results on the enstrophy have intuitive appeal. These results were proved in~\cite{Ju} for Cao and Titi's boundary conditions and the proof relies on applying the uniform Gronwall lemma to estimates\footnote{In particular, Ju uses the equality mentioned in the previous footnote (see inequality (3.11) of~\cite{Ju}).} from~\cite{CT}. The estimates in~\cite{KZ} seem less amenable to the uniform Gronwall lemma, and so we instead make a somewhat unorthodox argument that ties in closely to the argument in~\cite{KZ}.

Finally, we consider the primitive equations under a random kick forcing. We show that if the kicks are infrequent enough then there exists an invariant measure on $V$ ($H^1$ with extra conditions). Here we encounter the same set of issues discussed in~\cite{Ju}. We cannot define a dynamical system on $H$ ($L^2$ with extra conditions) due to the lack of uniqueness for weak solutions for the primitive equations. Therefore, we cannot even define an invariant measure on $H$. We can define a dynamical system and invariant measures on $V$ but then we have the issue that bounded balls in the $V$-norm are not compact in $V$. We get around this using a compactness argument from ~\cite{Ju}. Unfortunately, we do not see how to apply this argument in the case of forcing by white noise and so this leaves open the question of finding invariant measures for the primitive equations for such a forcing.

\subsection{Outline of our Paper}
In Section \ref{SectionSetup}, we introduce general notation and the various forms of the Primitive Equations we will consider, including the case of random kick-forcing. In Section \ref{Results}, we state precisely the results of this paper. In Section \ref{KZProof}, we reproduce a rough sketch of the argument in~\cite{KZ} in order to prove Lemma \ref{MainLemma}; Lemma \ref{MainLemma} is the key tool behind the proofs of our results and provides a quantitative bound on the growth of the $V$-norm of a solution over small time intervals. Finally, in Section \ref{ProofOfResults}, we prove our results.

\section{The Setup for the Primitive Equations}
\label{SectionSetup}
Mathematically, the Primitive Equations consist of taking the 3-dimensional Navier Stokes equations, removing the equation for the third component of the velocity, and positing that the pressure depends only on the first two position variables.

Let $G=G_2\times (-h,0)\subset \R^3$, where $h$ is a positive constant, and $G_2$ is a smooth bounded domain in $\R^2$. Let $u(x,t):=(v(x,t),u_3(x,t))$ be the velocity field where $v(x,t)$ is the horizontal velocity and $u_3(x,t)$ is the vertical velocity. We similarly decompose $x\in G$ as $x=(x',z)$. The primitive equations with constant forcing can then be written as
\begin{equation}
\label{thePEs}
\begin{cases}
\partial_t v-\nu\Delta v + (u\cdot \nabla) v+ \nabla_2 p = f\\
 \div\ u = 0\\
 v(x,0)=v_0
\end{cases},
\end{equation}
where $p=p(x',t)$ and where $v_0$ satisfies
$$
\div_2 \int_{-h}^0 v_0 dz = 0,
$$
and where $f\in H$ is independent of time.
The boundary conditions are described as follows. Let $\Gamma_t := G_2\times \{0\}$ denote the top, $\Gamma_s := \partial G_2 \times [-h,0]$ the sides, and $\Gamma_b:= G_2\times\{-h\}$ the bottom of $G$. The boundary conditions we consider, the so-called \emph{physical boundary conditions}, are
\begin{equation}
\label{BCs}
\begin{split}
\partial_z v =& 0 \text{ for } x\in\Gamma_t\\
v=&0 \text{ for } x\in\Gamma_b\cup \Gamma_s\\
u_3 =& 0 \text{ for } x\in \Gamma_t\cup \Gamma_b.
\end{split}
\end{equation}
Note that $u_3$ is determined by $v$ via the divergence free condition. Specifically we have that
$$
u_3(x,t)=-\int_{-h}^{z} \div_2 v (x',z',t) dz'
$$
and so we can express the primitive equations in the modified form, which can be thought of as a PDE for $v$,
\begin{equation}
\label{thePEsMod}
\begin{cases}
\partial_t v-\nu\Delta v + (v\cdot \nabla_2) v-\left(\int_{-h}^{z} \div_2 v (x',z',t) dz'\right)\partial_z v+ \nabla_2 p = f\\
 v(x,0)=v_0
\end{cases}.
\end{equation}

\subsection{The Functional Analytic Setup}
We now introduce some spaces of interest. Let
$$
\V := \left\{v\in C^\infty_{b,s}(G): \div_2\int_{-h}^0 v\ dz = 0 \text{ on } G_2\right\},
$$
where
$$
C^\infty_{b,s}(G):=\{v\in C^\infty(G): \text{supp}(v)\text{ is compact in }\bar{G}-(\Gamma_b\cup\Gamma_s)\}.
$$
We then define the spaces $H:=\overline{\V}^{L^2}$ and $V:=\overline{\V}^{H^1}$ (i.e. the closures in those topologies). It has been shown that (see~\cite{LTW} or Lemma 2.1 in ~\cite{HTZ2})
$$
H=\left\{v\in L^2: \div_2 \int_{-h}^0 v\ dz = 0 \text{ on }G_2, \left(\int_{-h}^0 v\ dz\right)\cdot n = 0 \text{ on }\partial G_2\right\}
$$
and
$$
V=\left\{v\in H\cap H^1 : v=0\text{ on } \Gamma_b\cup\Gamma_s \right\}.
$$
Also, $L^2 = H\oplus H^{\perp}$, where 
$$
H^{\perp} = \{v\in L^2: v=\nabla_2 p\text{ with } p\in H^1(G_2)\}.
$$
We equip these spaces with the norms
$$
\norm{v}_H := \norm{v}_{L^2}
$$
and
$$
\norm{v}_V := \left(\int_G |\nabla v|^2 dx\right)^{\frac{1}{2}},
$$
which is equivalent to the $H^1$-norm by the Poincar\'e inequality. Finally we introduce the spaces $V^n:=V\cap H^n$ with the usual Sobolev norm.

Let $A := -\Pi_H \Delta$ be the negative projection of the Laplacian onto the space $H$. Let (noting that here $u$ and $v$ are just placeholder variables)
$$
B(u,v):=\Pi_H\left[(u\cdot \nabla_2)v - \left(\int_{-h}^z \div_2 u\ dz'\right)\partial_z v\right].
$$

We can now project (\ref{thePEsMod}) onto $H$ to get
\begin{equation}
\label{thePEsAbstract}
\begin{cases}
\partial_t v+\nu Av + B(v,v) = \Pi_H f\\
 v(x,0)=v_0
\end{cases},
\end{equation}
which can be analyzed as an abstract evolution equation.

\subsection{Types of Solution}

\begin{defn}
We say that $v$ is a \emph{weak solution} to (\ref{thePEsMod}) on $[0,T]$ if
$$
v\in L^\infty([0,T];H)\cap L^2([0,T];V),\ \partial_t v\in L^2([0,T];V^{-3})
$$ 
(here $V^{-3}$ denotes the dual space to $V^3$) and the equalities in $(\ref{thePEsMod})$ hold in $V^{-3}$. That is, $\forall w\in V^3$,
\begin{align*}
\langle\partial_t v+ (v\cdot \nabla_2) v-\left(\int_{-h}^{z} \div_2 v (x',z',t) dz'\right)\partial_z v,w\rangle+ \langle\nu \nabla v,\nabla w\rangle =& \langle f,w\rangle\ \text{a.e. }t\in[0,T] \\
 \langle v(x,0), w\rangle = \langle v_0, w \rangle.
\end{align*}
Note that $\nabla p = 0$ in $V^{-3}$ (so it drops out) and note that our regularity assumptions on $v$ give us that $v\in C([0,T];V^{-3})$ so it makes sense to talk about $v(0)$. 
\end{defn}

\begin{defn}
We say that $v$ is a \emph{strong solution} to (\ref{thePEsMod}) on $[0,T]$ if 
$$
v\in L^\infty([0,T];V)\cap L^2([0,T];\D(A)),\ \partial_t v\in L^2([0,T];H)
$$ 
and the equalities in $(\ref{thePEsMod})$ hold in $H$. That is, 
\begin{align*}
\partial_t v+\nu\Delta v+ (v\cdot \nabla_2) v-\left(\int_{-h}^{z} \div_2 v (x',z',t) dz'\right)\partial_z v+ \nabla_2 p=& f\ \text{a.e. }x\in G, t\in[0,T] \\
v(x,0) = v_0\ \text{a.e. }x\in G.
\end{align*}
\end{defn}

It has been shown in~\cite{KZ} that for any $v_0\in V$ there exists a unique global\footnote{Here by ``global'' we mean there exists a solution on $[0,T]$ for all $T>0$} strong solution. In fact it can also be shown that this solution lies in $C(0,T;V)$ (This is crucial as we will routinely talk about the value $v(t)$ at different times $t$. See Appendix \ref{contintimeSection} for a proof). Global existence of weak solutions for any $v_0\in H$ was proven earlier (see~\cite{BGMR},~\cite{HTZ}, or~\cite{TZ}) but uniqueness of weak solutions remains an open problem. Therefore we consider only strong solutions and work in the space $V$ in order that we have a well defined dynamical system.

\subsection{Kick Forcing}

We now consider the primitive equations under a random kick-forcing. For a more in depth explanation of random kick-forcing we refer the reader to Chapter 3 of the book by Kuksin,~\cite{KuksinBook}. Let $\{\xi_n\}_{n=1}^\infty$ be i.i.d. random variables on a fixed probability space $(\Omega,\F,\mathbb{P})$ which take values in $V$. Let $S(t):V\to V$ be the solution operator to the primitive equations (\ref{thePEs}) with no forcing (i.e. $f\equiv 0$). This solution operator is well defined by global existence and uniqueness for strong solutions and therefore defines a dynamical system.

Fix a time step $T>0$. We define a random dynamical system, corresponding to random kicks at time intervals $T$: Fix a $v_0\in V$ and let $X_n:\Omega\to V$ be the random variables given by the relations
\begin{equation}
\label{eqnforXn}
X_0\equiv v_0\text{ and }X_{n}(\omega)=S(T)\left[X_{n-1}(\omega)\right]+\xi_n(\omega),\ \text{for }n=1,2,\ldots
\end{equation}
(We will suppress the dependence on $T$ of various objects. As $T$ is fixed at the outset this should provide no confusion). That is, at every time step $T$, we give our dynamical system a kick but otherwise flow according to the solution operator. $X_n$ is then a time-independent discrete time Markov process indexed by the positive integers and it has the associated transition probabilities $P(v,A)=\mathbb{P}(S(T)[v]+\xi_1\in A)$. Associated to this Markov process is the operator
\begin{align*}
\P:L^\infty(V)&\to L^\infty(V)\\
[\P f](v)&:=\int_V f(v')P(v,dv')
\end{align*}
and, letting $\M(V)$ denote the space of probability measures on $V$, we have also the dual operator
\begin{align*}
\P^*:\M(V)&\to \M(V)\\
[\P^*\mu](A):= \langle \P 1_A,\mu\rangle &= \int_V P(v,A)\mu(dv).
\end{align*}
Because of continuity of the mapping $v\mapsto S(T)[v]$ (see Appendix \ref{A2}) we have, by the Lebesgue dominated convergence theorem, that $\P:C_b(V)\to C_b(V)$, i.e. our Markov process is Feller.

Finally, we say that $\mu\in \M(V)$ is an \emph{invariant measure} for the kick-forced primitive equations with kicks at time intervals $T$ if $\P^*\mu=\mu$. 

\section {Our Results}
\label{Results}

We now state our main theorems: 

\begin{thm} 
\label{goalTheorem}
(Bounded absorbing set in $V$ under constant forcing) Let $v_0\in V$ be such that $\norm{v_0}^2_V\leq R$, let $f\in H$, and let $v(t)$ be the solution to (\ref{thePEsMod}) with initial data $v_0$. Then there exists $K>0$ and $T_V>0$, depending only on $R$ and $\norm{f}_H$, such that 
$$
\norm{v(t)}_V^2 \leq K,\ \ \forall t>T_V.
$$
\end{thm}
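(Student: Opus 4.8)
The plan is to combine the standard energy estimates for the primitive equations with the quantitative short-time growth bound furnished by Lemma~\ref{MainLemma}. First I would establish an absorbing ball in $H$: multiplying (\ref{thePEsAbstract}) by $v$ and using the orthogonality $\langle B(v,v),v\rangle = 0$ together with the Poincar\'e inequality gives $\frac{d}{dt}\norm{v}_H^2 + c\nu\norm{v}_H^2 \leq C\norm{f}_H^2/\nu$ for some constant $c$ depending on $G$, whence by Gr\"onwall $\norm{v(t)}_H^2 \leq \norm{v_0}_H^2 e^{-c\nu t} + C\norm{f}_H^2/(c\nu^2)$. Since $\norm{v_0}_H \leq C'\norm{v_0}_V^2 \leq C'R$ (up to Poincar\'e), there is a time $T_H$, depending only on $R$ and $\norm{f}_H$, after which $\norm{v(t)}_H^2 \leq \rho_H^2$ for an explicit $\rho_H$ depending only on $\norm{f}_H$ and $G$.

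Next I would extract, on every unit-length (or fixed-length $\tau$) time window $[t,t+\tau]$ with $t > T_H$, a good time $t^* \in [t,t+\tau]$ at which $\norm{v(t^*)}_V^2$ is controlled. This comes from integrating the same energy identity in time: $\nu\int_t^{t+\tau}\norm{v(s)}_V^2\,ds \leq \norm{v(t)}_H^2 + C\tau\norm{f}_H^2/\nu \leq \rho_H^2 + C\tau\norm{f}_H^2/\nu =: M$, so by the mean value theorem there exists $t^* \in [t,t+\tau]$ with $\norm{v(t^*)}_V^2 \leq M/(\nu\tau)$, a bound depending only on $\norm{f}_H$ and $G$ (for fixed choice of $\tau$). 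The point is that $v$ is guaranteed to be ``small in $V$'' at \emph{some} nearby time, even though we have no pointwise-in-time $V$-bound from the energy estimate alone.

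Finally I would invoke Lemma~\ref{MainLemma}: starting from the small value $\norm{v(t^*)}_V^2 \leq M/(\nu\tau)$, the lemma gives a quantitative bound on $\sup_{s \in [t^*, t^*+\delta]}\norm{v(s)}_V^2$ for a short time $\delta$ depending only on $M/(\nu\tau)$ and the data. Choosing $\tau$ so that every point $t > T_H + \tau$ lies within $\delta$ of some such good time $t^*$ — i.e. arranging $\delta \geq \tau$, which may force a fixed-point-style choice of $\tau$ since $\delta$ itself depends on the $V$-bound $M/(\nu\tau)$, which depends on $\tau$ — propagates the control to all $t$, yielding the uniform bound $\norm{v(t)}_V^2 \leq K$ for all $t > T_V := T_H + \tau$, with $K$ and $T_V$ depending only on $R$ and $\norm{f}_H$.

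\textbf{Main obstacle.} The delicate point is the circular dependence in the last step: the length $\delta$ of the interval on which Lemma~\ref{MainLemma} keeps the $V$-norm under control shrinks as the starting $V$-value grows, while that starting value $M/(\nu\tau)$ grows as $\tau$ shrinks. One must check that a consistent choice of $\tau$ (equivalently $\delta$) exists — e.g. that the map $\tau \mapsto \delta(\tau)$ has a fixed point or that one can simply take $\tau$ small enough that $\delta(M/(\nu\tau)) \geq \tau$ — and that all resulting constants depend only on $R$ and $\norm{f}_H$ (through $\norm{v_0}_V^2 \leq R$ and the $H$-absorbing radius $\rho_H$), and not on finer features of $v_0$. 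A secondary technical point is making sure the ``good time'' $t^*$ can be chosen in a way compatible with covering the whole half-line $(T_V, \infty)$ by the short good intervals, which is why the estimate is set up on a sliding window of fixed length.
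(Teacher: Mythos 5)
Your first two steps (the absorbing ball in $H$ and the extraction of a good time $t^*$ by the mean value theorem from the time-integrated energy inequality) are exactly the paper's. The gap is in the gluing step, and it stems from a misreading of Lemma \ref{MainLemma}: the hypothesis of that lemma is not that the window be short by an amount $\delta$ depending on the starting value $\norm{v(\tau_1)}_V^2$, but that $|\tau_3-\tau_1|\le 1$ \emph{and} $\int_{\tau_1}^{\tau_3}\norm{v}_V^2\,d\tau\le\eta$ for a fixed small threshold $\eta=\eta(C,\nu)$. So there is no map $\tau\mapsto\delta(\tau)$ to seek a fixed point of; that is not the real obstruction. The real obstruction is that a single application of the lemma cannot bridge from $t^*$ to an arbitrary nearby time $t$: by (\ref{piece1}) the only available bound on $\int_{t^*}^{t}\norm{v}_V^2\,d\tau$ is of order $\norm{v(t^*)}_H^2+(t-t^*)\norm{f}_H^2$, and the term $\norm{v(t^*)}_H^2\le\rho_H^2$ does not shrink when you shrink the window length $\tau$. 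Since $\rho_H^2$ will in general exceed the fixed threshold $\eta$, the hypothesis of Lemma \ref{MainLemma} simply need not hold on the covering interval, no matter how $\tau$ and $\delta$ are tuned. Your one-shot strategy therefore does not close.

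The paper resolves this by iteration rather than by a single application: for each target time $T$ it finds $t_0\in[T-2,T-1]$ with $\norm{v(t_0)}_V^2\le K+\norm{f}_H^2$, notes that $\int_{t_0}^{T}\norm{v}_V^2\,d\tau\le K+2\norm{f}_H^2$ with $T-t_0\le 2$, and partitions $[t_0,T]$ into at most $L$ subintervals each satisfying both hypotheses of Lemma \ref{MainLemma}, where $L$ is controlled by $(K+2\norm{f}_H^2)/\eta$ plus a constant and hence is independent of $T$. Composing the growth function $L$ times yields $\norm{v(T)}_V^2\le\Gamma^{(L)}\left(K+\norm{f}_H^2\right)$, which depends only on $R$ and $\norm{f}_H$. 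Your argument needs this iteration, together with the observation that the number of steps $L$ is uniform in $T$, in order to become a proof; as written, the final step fails.
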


\begin{thm}
\label{absorbingforkicks}
(Decay in $V$-norm for the unforced equation) Let $v_0\in V$ be such that $\norm{v_0}^2_V\leq R$, let $f\equiv 0$, and let $v(t)$ be the solution to (\ref{thePEsMod}) with initial data $v_0$. Then for all $\epsilon>0$, there exists a time\footnote{We abuse notation here by defining $T_V$ again but we will never be considering the forced and unforced equations simultaneously so there should be no confusion.} $T_V=T_V(R,\epsilon)$ such that
$$
\norm{v(t)}_V^2 \leq \epsilon,\text{ for all } t\geq T_V(R,\epsilon).
$$
\end{thm}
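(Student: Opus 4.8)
The plan is to combine the standard energy identity for the $H$-norm (which gives exponential decay of $\norm{v(t)}_H$) with Lemma \ref{MainLemma}, which controls the growth of the $V$-norm over short time intervals in terms of the $H$-norm at the start of the interval. First I would derive the $L^2$ estimate: testing (\ref{thePEsAbstract}) with $v$ and using the antisymmetry of the bilinear term (so that $\langle B(v,v),v\rangle=0$) together with the Poincar\'e inequality $\norm{v}_V^2 \geq \lambda_1 \norm{v}_H^2$, one gets $\frac{d}{dt}\norm{v}_H^2 \leq -2\nu\lambda_1\norm{v}_H^2$, hence $\norm{v(t)}_H^2 \leq e^{-2\nu\lambda_1 t}\norm{v_0}_H^2 \leq C e^{-2\nu\lambda_1 t} R$, where the last step uses that $\norm{v_0}_H^2 \lesssim \norm{v_0}_V^2 \leq R$. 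In particular, given any $\delta>0$ there is a time $T_1=T_1(R,\delta)$ with $\norm{v(T_1)}_H^2 \leq \delta$.

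Next I would invoke Lemma \ref{MainLemma}: it provides a length $\tau>0$ and a bound $N$ such that, once the $H$-norm at the start of a window of length $\tau$ is small enough, the $V$-norm at the end of the window is bounded by some explicit quantity depending on that $H$-norm (and $\tau$). The key point is that because the equation is unforced, this bound can be made as small as we like by first making $\norm{v}_H$ small: concretely, I expect Lemma \ref{MainLemma} to give something of the shape $\norm{v(t_0+\tau)}_V^2 \leq \Phi\bigl(\norm{v(t_0)}_H^2,\tau\bigr)$ with $\Phi(s,\tau)\to 0$ as $s\to 0$ for fixed $\tau$. So, given $\epsilon>0$, choose $\tau$ admissible for the lemma, then choose $\delta$ small enough that $\Phi(\delta,\tau)\leq\epsilon$, then choose $T_1(R,\delta)$ from the $L^2$ decay above, and set $T_V := T_1 + \tau$. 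For $t\geq T_V$ write $t = t_0 + \tau$ with $t_0 \geq T_1$; since the $H$-norm is monotone decreasing we have $\norm{v(t_0)}_H^2 \leq \delta$, and the lemma yields $\norm{v(t)}_V^2 \leq \Phi(\delta,\tau) \leq \epsilon$. This handles every $t\geq T_V$ at once because the $H$-norm stays below $\delta$ for all times past $T_1$, so the short-interval estimate can be reapplied with left endpoint $t-\tau$ for each such $t$.

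The main obstacle is making the dependence in Lemma \ref{MainLemma} quantitatively explicit enough that the resulting bound on $\norm{v(t)}_V^2$ genuinely tends to $0$ with the initial $H$-norm, rather than merely staying bounded — this is exactly the subtlety the introduction flags, namely that the \cite{KZ} estimates are not directly amenable to the uniform Gronwall lemma. One has to be careful that the constants produced over the window of length $\tau$ do not themselves blow up as $\tau$ is chosen (for instance if the lemma only works for $\tau$ below some threshold depending on $\norm{v}_H$), and that the "short interval" estimate can be iterated/slid along without accumulating error. Assuming Lemma \ref{MainLemma} is stated with the clean form $\Phi(s,\tau)\to 0$ as $s\to 0$, the remaining argument is the elementary chaining described above. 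A minor additional point is to confirm $\norm{\cdot}_H \leq C\norm{\cdot}_V$ on $V$ (immediate, since $V \hookrightarrow H$ continuously) so that the hypothesis $\norm{v_0}_V^2\le R$ controls $\norm{v_0}_H^2$.
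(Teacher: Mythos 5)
Your overall strategy---first get exponential decay of $\norm{v(t)}_H^2$, then upgrade to the $V$-norm using Lemma \ref{MainLemma}---is the same as the paper's, and your $L^2$ step is correct. But the argument as written rests on a form of Lemma \ref{MainLemma} that it does not have. You posit a bound of the shape $\norm{v(t_0+\tau)}_V^2 \leq \Phi\bigl(\norm{v(t_0)}_H^2,\tau\bigr)$ with $\Phi(s,\tau)\to 0$ as $s\to 0$; the actual lemma gives $\norm{v(\tau_2)}_V^2 \leq \Gamma\bigl(\norm{v(\tau_1)}_V^2\bigr)$, i.e.\ the output is controlled by the \emph{$V$-norm} at the left endpoint, not the $H$-norm, and $\Gamma(y)\to 0$ only as its argument $y\to 0$. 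Since the $L^2$ decay gives you no pointwise control of $\norm{v(\tau_1)}_V$ at any prescribed time, your final step ``the lemma yields $\norm{v(t)}_V^2\leq\Phi(\delta,\tau)\leq\epsilon$'' does not go through: feeding a small $H$-norm into the lemma is not something the lemma permits. This is exactly the obstacle you flag at the end as ``the main obstacle,'' but it is not a matter of sharpening constants---it requires an additional idea.

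The missing idea is a mean-value (pigeonhole) step. From the unforced energy identity one has $\int_s^t\norm{v(\tau)}_V^2\,d\tau\leq\norm{v(s)}_H^2$ (this is (\ref{piece2})). Once $t\geq T_H+2$, the integral of $\norm{v}_V^2$ over $[t-2,t-1]$ is at most $\epsilon$, so there \emph{exists} a time $t_0\in[t-2,t-1]$ with $\norm{v(t_0)}_V^2\leq\epsilon$ pointwise. That is the quantity Lemma \ref{MainLemma} actually accepts as input. One then uses (\ref{piece2}) again to see $\int_{t_0}^{t}\norm{v}_V^2\,d\tau\leq\epsilon$, partitions $[t_0,t]$ into at most $L=\max(\epsilon/\eta,2)$ subintervals each satisfying the hypothesis (\ref{etacondition}) (note $L$ is independent of $t$), and chains the lemma $L$ times to get $\norm{v(t)}_V^2\leq\Gamma^{(L)}(\epsilon)$, which tends to $0$ with $\epsilon$ since $f\equiv 0$. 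Your ``sliding window of fixed length $\tau$'' picture also does not match the lemma's hypothesis, which is smallness of $\int_{\tau_1}^{\tau_3}\norm{v}_V^2\,d\tau$ rather than smallness of $\tau_3-\tau_1$; the partition-into-$L$-pieces argument is what replaces it.
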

\begin{rk}
Note carefully that this time $T_V$ depends only on $R$ and not the actual value of $u_0$ itself. This stronger statement will be key for us. If we only wanted $T_V$ to depend on $u_0$ we could argue this more directly from the paper~\cite{KZ} (the issue is that the $\delta$ they define on page 2746 depends on $u_0$ itself).
\end{rk}
\begin{thm}
\label{InvariantMeasureKickForcing}
(Invariant Measure for Kick-forcing) Consider the primitive equations with random kick-forcing and assume the kicks are bounded in the $H^2$ norm, i.e.
$$
\text{There exists }R>0,\text{ such that } \norm{A\xi_n}^2_H\leq R \text{ a.s.},\ \forall n.
$$
Then there exists a time $T=T(R)$ such there exists an invariant measure for the primitive equations with this random kick-forcing at time intervals $T$.
\end{thm}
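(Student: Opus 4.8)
The plan is to apply the classical Bogolyubov--Krylov argument adapted to the space $V$, exactly as in the treatment of this issue in~\cite{Ju}. We already know from the discussion above that the transition operator $\P$ is Feller, so $\P^*$ maps $\M(V)$ to $\M(V)$ continuously with respect to weak convergence against $C_b(V)$. The obstruction to the naive Bogolyubov--Krylov scheme is that closed balls in $V$ are not compact in $V$, so tightness of the averaged measures $\mu_N := \frac{1}{N}\sum_{n=0}^{N-1}(\P^*)^n \delta_{v_0}$ does not follow for free from a uniform bound in $V$. The fix, following~\cite{Ju}, is to observe that closed balls in the smaller space $V^2 = V\cap H^2$ \emph{are} compact in $V$ (this is a compact Sobolev embedding plus the fact that $V$ is a closed subspace), so it suffices to show that a ball in the $H^2$-norm ``traps'' the dynamics on average, i.e. that $\mu_N(\{\norm{Av}_H^2 \le \Lambda\}) \ge 1 - \eta$ uniformly in $N$ for $\Lambda$ large.

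First I would choose the time step $T$. Using Theorem \ref{goalTheorem} with forcing $f\equiv 0$ but accounting for the kicks: since $\norm{A\xi_n}_H^2 \le R$ a.s., the kicks live in a fixed bounded set of $V^2$, hence of $V$; say $\norm{\xi_n}_V^2 \le R'$ for a constant $R'$ depending on $R$ (and the domain). By Theorem \ref{absorbingforkicks} (decay in the $V$-norm for the unforced flow), there is a time $T = T(R')$ such that $S(T)$ maps the ball $\{\norm{v}_V^2 \le 4R'\}$ into the ball $\{\norm{v}_V^2 \le R'\}$; then after a kick we are back in $\{\norm{v}_V^2 \le 4R'\}$ (using $\norm{a+b}_V^2 \le 2\norm a_V^2 + 2\norm b_V^2 \le 2R' + 2R'$). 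Hence the ball $B := \{\norm{v}_V^2 \le 4R'\}$ in $V$ is invariant for the one-step Markov transition from any starting point in $B$, and moreover for arbitrary $v_0 \in V$ the iterates $X_n$ enter $B$ after finitely many steps (again by Theorem \ref{absorbingforkicks}, since the unforced flow eventually brings $\norm{v}_V^2$ below $R'$ and the kick contributes at most $2R'$). So without loss of generality we may take $v_0 \in B$ and then $X_n \in B$ almost surely for all $n$.

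Next I would upgrade the uniform $V$-bound to a uniform-in-average $V^2$-bound, which is where Lemma \ref{MainLemma} (and the $H^2$-bounds implicit in the Kukavica--Ziane argument reproduced in Section \ref{KZProof}) enters. The point is standard for such absorbing-set arguments: starting from $\norm{X_n}_V^2 \le 4R'$, the strong solution on $[0,T]$ satisfies a differential inequality that, after integration, yields $\int_0^T \norm{A v(s)}_H^2\, ds \le C(R')$, i.e. the $V^2$-norm is bounded in $L^2$ in time over each inter-kick interval, with a constant depending only on $R'$. Combining this with the fact that a kick $\xi_{n+1}$ adds at most $R$ to $\norm{A\,\cdot\,}_H^2$ and with a pigeonhole/averaging argument over the time interval, one gets a set of times (or, after averaging over $n$, a set of $\P^*$-mass) of controlled size on which $\norm{Av}_H^2 \le \Lambda$. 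Concretely: $\mathbb{E}[\frac{1}{N}\sum_{n=0}^{N-1} \norm{A X_n}_H^2]$ may not itself be finite from the $L^2$-in-time bound alone, so instead I would pick a deterministic instant $t_* \in (0,T)$ with $\norm{A v(t_*)}_H^2 \le \frac{2}{T} C(R')$ guaranteed by the integral bound, and build the averaged measures from the samples $v(t_*)$ along each inter-kick interval; these samples lie in a fixed ball of $V^2$, hence in a compact subset $\mathcal K$ of $V$, so $\mu_N(\mathcal K) = 1$ for all $N$ and the family $\{\mu_N\}$ is tight. By Bogolyubov--Krylov (Prokhorov plus the Feller property), a subsequence of $\mu_N$ converges weakly to some $\mu \in \M(V)$, and the standard computation $\langle \P f, \mu_N\rangle - \langle f, \mu_N \rangle = \frac{1}{N}(\langle f,(\P^*)^N\delta\rangle - \langle f,\delta\rangle) \to 0$ for $f \in C_b(V)$ shows $\P^*\mu = \mu$.

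The main obstacle, and the step deserving the most care, is the $V^2$-in-time estimate over each inter-kick interval with a constant depending only on the $V$-norm of the data: the physical boundary conditions (\ref{BCs}) are precisely what makes the $H^2$-type estimates delicate in~\cite{KZ}, and one must check that the bounds extracted there (and summarized via Lemma \ref{MainLemma}) genuinely give an $L^2_t V^2$ bound on $[0,T]$ depending only on $\norm{v_0}_V$ and not on higher norms of $v_0$ — otherwise the sampled points $v(t_*)$ do not lie in a fixed compact set and tightness fails. A secondary technical point is verifying the compact embedding $\{\norm{v}_{V^2}\le \Lambda\} \hookrightarrow\hookrightarrow V$ on the domain $G$ with these boundary conditions, but this is routine given the Rellich--Kondrachov theorem and the fact that $V$ is closed in $H^1$.
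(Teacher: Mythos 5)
Your overall skeleton --- an invariant ball $\{\norm{v}_V^2\le 4R'\}$ obtained from Theorem \ref{absorbingforkicks}, the Feller property, and a Bogolyubov--Krylov limit --- matches the paper's proof. The divergence, and the gap, is in how you establish tightness. You try to manufacture compactness by proving an $L^2_t$-in-time bound $\int_0^T \norm{Av(s)}_H^2\,ds \le C(R')$ on each inter-kick interval and then ``picking a deterministic instant $t_*$ with $\norm{Av(t_*)}_H^2\le \frac{2}{T}C(R')$ guaranteed by the integral bound.'' That step is false as stated: a pathwise integral bound only produces a \emph{trajectory-dependent} (hence random) good time via pigeonhole; there is no single deterministic $t_*$ at which every realization satisfies the pointwise bound, so your claim $\mu_N(\mathcal{K})=1$ does not follow. (One can salvage tightness --- not full mass --- by Fubini and Chebyshev: some deterministic $t_*$ satisfies $\E\,\norm{Av(t_*)}_H^2\le C(R')/T$, whence $\mathbb{P}(\norm{Av(t_*)}_H^2>\Lambda)\le C/(T\Lambda)$.) A second, separate problem is that sampling at times $nT+t_*$ defines a \emph{different} Markov chain from the one in (\ref{eqnforXn}); an invariant measure for it must still be pushed forward through $S(T-t_*)$ and a kick to produce one for the chain $X_n$ the theorem is about, and you do not address this.

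The paper avoids all of this by using two pieces of compactness you did not exploit. First, $S(T):V\to V$ is a \emph{compact operator} (Appendix \ref{A3}, proved via Aubin--Lions), so $S(T)\bigl[B_V(2\sqrt{R})\bigr]$ is a compact subset of $V$. Second, the hypothesis $\norm{A\xi_n}_H^2\le R$ a.s.\ says the kicks lie in the fixed ball $B_{\D(A)}(\sqrt{R})$, which is compact in $V$ by Rellich; you only used this hypothesis to bound $\norm{\xi_n}_V$, which discards exactly the information that makes the argument easy. Since $X_n=S(T)[X_{n-1}]+\xi_n$ and the sum of two compact sets is compact, every $\mu_n$ is supported on the single compact set $S(T)\bigl[B_V(2\sqrt{R})\bigr]+B_{\D(A)}(\sqrt{R})$, tightness is immediate, and no $H^2$ estimate on the solution itself is needed. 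I recommend you replace your $V^2$-in-time argument with this observation.
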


\section{Sketch of the Proof of Global Existence of Strong Solutions}
\label{KZProof}

Our main goal in this section is to state and prove Lemma \ref{MainLemma}. Since Lemma \ref{MainLemma} hinges on an inequality which appears deep in~\cite{KZ} (inequality (2.14) there, which we give as inequality (\ref{IneqJandK}) here), we have decided things would be clearest if we reproduced the sketch of the proof of global existence from~\cite{KZ} here. As a small bonus, we will show how Lemma \ref{MainLemma} can be used to give an alternate ending to the proof. Since the argument until inequality (\ref{IneqJandK}) is more or less verbatim, we present only a rough sketch of the argument and direct the reader to~\cite{KZ} for more details.

Let $v(t)$ be a local strong solution to (\ref{thePEsMod}) with initial condition $v_0 \in V$, extended to its maximal interval of existence $[0,T_{\text{max}})$. We will prove that $\norm{v(t)}_V$ is bounded on this interval yielding a contradiction.

We take the inner product of both sides of (\ref{thePEsAbstract}) by $Av$ and use standard estimates to get\footnote{We will use $C$ to denote a positive constant which may change from line to line but does not depend on any critical quantities.}
\begin{equation}
\label{Gronwall}
\frac{d}{dt}\norm{v}_V^2+\nu\norm{Av}_H^2 \leq \frac{C}{\nu^3}\left(\norm{v}_{L^6}^4+\norm{\partial_z v}_H^2\norm{\nabla\partial_z v}_{H}^2\right)\norm{v}_V^2+\norm{f}_H^2.
\end{equation}
In order to apply the Gronwall lemma, we now estimate the terms in parentheses.

\subsection{The $\norm{v}_{L^6}$ Estimate}

We start with the primitive equations in the form (\ref{thePEs}), multiply both sides by $u_k^5$ for $k=1,2$, integrate over $G$, and sum over $k$ to get that
\begin{equation}
\label{eqn1}
\frac{1}{6}\sum_{k=1}^2\frac{d}{dt}\norm{u_k}_{L^6}^6+\frac{5}{9}\nu\sum_{k=1}^2\int_G |\nabla(u_k^3)|^2\ dx\\
                                          = -\sum_{k=1}^2\int_G\partial_k p u_k^5\ dx+ \sum_{k=1}^2\int_G f_k u_k^5\ dx := I_1+ I_2.
\end{equation}
The second term on the right side of equation (\ref{eqn1}) is estimated by 
$$
I_2\leq C \norm{f}_H\norm{v}_{L^6}^2\left(\sum_{k=1}^2\norm{\nabla(u_k^3)}_{L^2}^2\right)^{\frac{1}{2}},
$$
while the first term on the right side of (\ref{eqn1}) is handled by an averaging trick (This trick was pioneered in~\cite{CT} and it is here we exploit the fact that $p$ is independent of $z$): We let
$$
M(u)(x'):=\frac{1}{h}\int_{-h}^{0} u(x',z)\ dz,
$$
and so, since the pressure is independent of $z$, 
$$
I_1=-h\sum_{k=1}^2\int_{G_2} M(u_k^5)\partial_k p\ dx' \leq h\sum_{k=1}^2 \norm{M(u_k^5)}_{L^3(G_2)}\norm{\nabla_2 p}_{L^{\frac{3}{2}}(G_2)}.
$$
Now, playing with Sobolev inequalities (taking advantage of the fact we are considering a 2 dimensional domain) we get that
$$
I_1\leq \frac{C}{\nu}\norm{\nabla_2 p}_{L^{\frac{3}{2}}(G_2)}^2\norm{v}_{L^6}^4+\epsilon\nu\sum_{k=1}^2\norm{\nabla(u_k^3)}_{L^2}^2.
$$
Thus we arrive at the inequality
\begin{equation}
\label{estimateJ}
\frac{1}{6}\sum_{k=1}^2\frac{d}{dt}\norm{u_k}_{L^6}^6+\frac{5}{9}\nu\sum_{k=1}^2\int_G |\nabla(u_k^3)|^2\ dx                                          \leq \frac{C}{\nu}\left(\norm{\nabla_2 p}_{L^{\frac{3}{2}}(G_2)}^2+\norm{f}_H^2\right)\norm{v}_{L^6}^4.
\end{equation}

\subsection{The $\norm{\partial_z v}_{L^2}$ and $\norm{\nabla\partial_z v}_{L^2}$ Estimates}

We multiply equation (\ref{thePEs}) by $-\partial_{zz}u_k$, $k=1,2$, integrate over $G$, and sum to get
\begin{multline}
\label{eqn5}
-\sum_{k=1}^2\int_G \partial_t u_k \partial_{zz}u_k\ dx+\nu\sum_{k=1}^2\int_G \Delta u_k \partial_{zz}u_k\ dx\\
                                          = \sum_{k=1}^2\sum_{j=1}^3\int_G u_j\partial_j u_k \partial_{zz}u_k\ dx +\sum_{k=1}^2\int_G\partial_k p \partial_{zz}u_k\ dx- \sum_{k=1}^2\int_G f_k \partial_{zz}u_k\ dx:= I_1+I_2+I_3.
\end{multline}
We estimate $I_2$: Recalling that the pressure is independent of $z$ we have that
\begin{align*}
\int_G\partial_k p \partial_{zz}u_k\ dx \leq& \left|\int_{G_2}\partial_k p \int_{-h}^0\partial_{zz}u_k\ dz\ dx'\right| \\
          =& \left|\int_{G_2}\partial_k p \partial_z u_k(z=-h)\ dx'\right| \\
          \leq& \norm{\partial_k p}_{L^{\frac{3}{2}}(G_2)}\norm{\partial_z u_k (z=-h)}_{L^3(G_2)}.
\end{align*}
whence, by two-dimensional Sobolev embeddings,
\begin{align*}
I_2 \leq& \frac{C}{\nu^3}\norm{\nabla_2 p}^2_{L^{\frac{3}{2}}(G_2)}+\epsilon\nu\norm{\nabla\partial_z v}_{L^2}^2.
\end{align*}
After using integration by parts and standard estimates to estimate $I_1$ and $I_3$, we have
\begin{equation}
\label{estimateK}
\frac{d}{dt}\norm{\partial_z v}^2_{L^2}+\nu\norm{\nabla \partial_z v}_{L^2}^2
  \leq \frac{C}{\nu^3}\norm{v}_{L^6}^4\norm{\partial_z v}_{L^2}^2 + \frac{C}{\nu}\norm{\nabla_2 p}_{L^{\frac{3}{2}}(G_2)}^2+C\norm{f}_{L^2}^2.
\end{equation}

\subsection{The $\norm{\nabla_2 p}_{L^{\frac{3}{2}}(G_2)}$ Estimate}

We will need to estimate the pressure as it shows up in the previous two estimates. We apply the vertical averaging operator to both sides of equation (\ref{thePEs}) to get, after integration by parts on the $j=3$ summand,
\begin{equation}
\label{thePEsAverage}
\begin{cases}
M\partial_t u_k -\nu M \Delta_2 u_k + \partial_k p =& \nu\partial_z u_k(z=-h) - M\left(\sum_{j=1}^2 u_j\partial_j u_k\right)\\
&- M\left(\sum_{j=1}^2 \partial_j u_j u_k\right)+ Mf_k,\text{ for }k=1,2\\
\partial_1 M u_1 + \partial_2 M u_2 = 0.
\end{cases}
\end{equation}
We note that, for any time interval $[\tau_1,\tau_2]$, the PDE (\ref{thePEsAverage}) can be thought of as a 2D Stokes problem on $G_2\times [\tau_1,\tau_2]$ for $(Mu,p)$ with initial data $Mu(\tau_1)$. Therefore, we can apply a regularity result of Sohr and von Wahl from~\cite{SvW} which says that
$$
\int_{\tau_1}^{\tau_2}\norm{\nabla_2 p}^2_{L^{\frac{3}{2}}(G_2)}\ d\tau \leq C\norm{-\Delta_{3/2}^{1/2+\epsilon}Mu(\tau_1)}^2_{L^{\frac{3}{2}}(G_2)}+C\int_{\tau_1}^{\tau_2}\norm{RHS}^2_{L^{\frac{3}{2}}(G_2)}\ d\tau,
$$
where $RHS$ is the right hand side of (\ref{thePEsAverage}). After estimating these terms, we get the pressure estimate
\begin{equation}
\label{estimateP}
\begin{split}
\int_{\tau_1}^{\tau_2}\norm{\nabla_2 p}^2_{L^{3/2}(G_2)}\ d\tau \leq& C\norm{v(\tau_1)}^2_V + C\int_{\tau_1}^{\tau_2}\norm{\partial_z v}_{L^2}\norm{\nabla\partial_z v}_{L^2}\ d\tau\\
&+C\int_{\tau_1}^{\tau_2}\norm{v}^2_{L^6}\norm{v}^2_{V}\ d\tau + C\int_{\tau_1}^{\tau_2}\norm{f}^2_{L^2}\ d\tau.
\end{split}
\end{equation}

\subsection{Combining the three estimates to bound the quantities $\norm{v}^4_{L^6}$, $\norm{\partial_z v}^2_{L^2}$, and $\norm{\nabla\partial_z v}^2_{L^2}$}

We consider our three estimates (\ref{estimateJ}), (\ref{estimateK}), and (\ref{estimateP}). To simplify notation we use the notation from~\cite{KZ}:
\begin{eqnarray*}
J:= \norm{v}_{L^6}\\
K:=\norm{\partial_z v}_{L^2}\\
\bar{K} := \norm{\nabla\partial_z v}_{L^2}\\
\bar{E} := \norm{v}_V
\end{eqnarray*}
and we also use the notation
$$
\norm{G}^2_{L^2_t(\tau_1,\tau_2)}:=\int_{\tau_1}^{\tau_2} G^2(\tau)\ d\tau.
$$
Consider an arbitrary triple of times $0\leq \tau_1<\tau_2<\tau_3$. From (\ref{estimateJ}) we have
$$
J^4(\tau_2)\leq J^4(\tau_1) + \frac{C}{\nu}\left(C\bar{E}^2(\tau_1)+ C\nu\norm{K}_{L^2_t}\norm{\bar{K}}_{L^2_t}+C\norm{J\bar{E}}^2_{L^2_t}\right)\sup_{\tau_1\leq\tau\leq\tau_3}J^2(\tau)\\
$$
and after taking the supremum over $\tau_2\in[\tau_1,\tau_3]$, rearranging terms, and applying standard estimates (and in particular using $K\leq C\bar{E}$) we get
\begin{equation}
\label{IneqJ}
\sup_{\tau_1\leq\tau\leq\tau_3}J^4(\tau)\leq J^4(\tau_1)+ \left(C\norm{\bar{E}}^2_{L^2_t}+\epsilon\right)\sup_{\tau_1\leq\tau\leq\tau_3}J^4(\tau)+\frac{C}{\nu^2}\bar{E}^4(\tau_1)+ C\norm{\bar{E}}^2_{L^2_t}\norm{\bar{K}}^2_{L^2_t}+\frac{C}{\nu^2}\norm{f}^4_{L^2_t}.
\end{equation}
Similarly, starting from (\ref{estimateK}), rearranging terms and applying standard estimates we get
\begin{multline}
\label{IneqK}
\sup_{\tau_1\leq\tau\leq\tau_3}K^2(\tau)+\frac{\nu}{2}\norm{\bar{K}}^2_{L^2_t}\leq K^2(\tau_1)+\left(\frac{C}{\nu^3}\norm{\bar{E}}^2_{L^2_t}+\epsilon\right)\sup_{\tau_1\leq\tau\leq\tau_3}J^4(t)\\
+ \frac{C}{\nu}\bar{E}(\tau_1)^2 + \frac{C}{\nu}\norm{\bar{E}}^2_{L^2_t}+ \frac{C}{\nu^2}\norm{\bar{E}}^4_{L^2_t}+\frac{C}{\nu}\norm{f}^2_{L^2_t}.
\end{multline}
Summing these last two inequalities we get
\begin{equation}
\label{IneqJandK}
\begin{split}
\sup_{\tau_1\leq\tau\leq\tau_3}J^4(\tau)+\sup_{\tau_1\leq\tau\leq\tau_3}K^2(\tau)+\frac{\nu}{2}\norm{\bar{K}}^2_{L^2_t}\leq&\ J^4(\tau_1)+K^2(\tau_1)\\
&+\left(\frac{C}{\nu}\norm{\bar{E}}^2_{L^2_t}+\frac{C}{\nu^3}\norm{\bar{E}}^2_{L^2_t}+2\epsilon\right)\sup_{\tau_1\leq\tau\leq\tau_3}J^4(t)\\
&+C\norm{\bar{E}}^2_{L^2_t}\norm{\bar{K}}^2_{L^2_t}+ \frac{C}{\nu}\bar{E}(\tau_1)^2 + \frac{C}{\nu}\norm{\bar{E}}^2_{L^2_t}\\
&+\frac{C}{\nu^2}\bar{E}^4(\tau_1)+\frac{C}{\nu^2}\norm{\bar{E}}^4_{L^2_t}+\frac{C}{\nu}\norm{f}^2_{L^2_t}+\frac{C}{\nu^2}\norm{f}^4_{L^2_t}.
\end{split}
\end{equation}
We now wish to consider intervals $[\tau_1,\tau_3]$ small enough that we can keep the right side of (\ref{IneqJandK}) under control. Specifically, we want the three terms in the parentheses to be less than $\frac{1}{2}$ and $C\norm{\bar{E}}^2_{L^2_t}\leq\frac{\nu}{4}$ so that we can absorb the third and fourth term on the right into the left hand side. 

\subsection{The Growth Control Lemma and the Conclusion of the Proof}
\label{ConclusionOfProof}

All of the above estimates appear in~\cite{KZ}. From here on we present a slightly different approach than that of Kukavica and Ziane. We put the ideas of the previous paragraph into the following lemma:
\begin{lem}
\label{MainLemma} (Growth control lemma)
There is some $\eta>0$ such that if $0\leq\tau_1\leq \tau_3$ are close in the sense that
\begin{equation}
\label{etacondition}
|\tau_3-\tau_1|\leq 1\text{ and }\int_{\tau_1}^{\tau_3} \norm{v(\tau)}^2_V\ d\tau \leq \eta,
\end{equation}
then
$$
\norm{v(\tau_2)}^2_V \leq e^{C(1+\norm{v(\tau_1)}_V^2)^4}\left[\norm{v(\tau_1)}_V^2+\norm{f}_H^2\right]=:\Gamma\left(\norm{v(\tau_1)}_V^2\right)
$$
for any $\tau_2\in[\tau_1,\tau_3]$, where $C=C(\nu,\eta,\norm{f}_H)$.
\end{lem}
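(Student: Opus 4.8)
The plan is to turn the master inequality (\ref{IneqJandK}) of~\cite{KZ} into an absorbable inequality by a suitable choice of $\eta$, read off from it uniform bounds on the auxiliary quantities $J=\norm{v}_{L^6}$, $K=\norm{\partial_z v}_H$ and $\bar K=\norm{\nabla\partial_z v}_H$ over $[\tau_1,\tau_3]$, and then close the argument with one genuine Gronwall step applied to the energy inequality (\ref{Gronwall}).

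First I would fix the free parameter $\epsilon$ in (\ref{IneqJandK}) --- which comes from Young's inequality and is at our disposal --- so small that $2\epsilon\le\tfrac14$, and only then choose $\eta>0$ so small that the two conditions flagged in the paragraph before the lemma hold, namely $\bigl(\tfrac{C}{\nu}+\tfrac{C}{\nu^3}\bigr)\eta\le\tfrac14$ and $C\eta\le\tfrac{\nu}{4}$, where $C$ denotes the (now fixed) constants in (\ref{IneqJandK}). Under the hypothesis (\ref{etacondition}) one has $\norm{\bar E}^2_{L^2_t(\tau_1,\tau_3)}=\int_{\tau_1}^{\tau_3}\norm{v}_V^2\,d\tau\le\eta$, so in (\ref{IneqJandK}) the parenthetical coefficient multiplying $\sup J^4$ is $\le\tfrac12$ and the coefficient multiplying $\norm{\bar K}^2_{L^2_t}$ is $\le\tfrac{\nu}{4}$; since $v$ is a strong solution, $v\in L^\infty([\tau_1,\tau_3];V)\hookrightarrow L^\infty([\tau_1,\tau_3];L^6)$ shows $\sup J^4<\infty$, so these two terms may legitimately be moved to the left-hand side. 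Using $|\tau_3-\tau_1|\le1$ to bound $\norm{f}^2_{L^2_t}\le\norm{f}_H^2$ and $\norm{f}^4_{L^2_t}\le\norm{f}_H^4$, and the embedding $V\hookrightarrow L^6$ (so $J(\tau_1)\le C\norm{v(\tau_1)}_V$) together with $K(\tau_1)\le\norm{v(\tau_1)}_V$, what remains on the right is a polynomial in $\norm{v(\tau_1)}_V$ plus a constant depending only on $\nu,\eta,\norm{f}_H$; collecting terms through $\norm{v(\tau_1)}_V^2+\norm{v(\tau_1)}_V^4\le(1+\norm{v(\tau_1)}_V^2)^2$ I would arrive at
\[
\sup_{\tau_1\le\tau\le\tau_3}J^4(\tau)+\sup_{\tau_1\le\tau\le\tau_3}K^2(\tau)+\norm{\bar K}^2_{L^2_t(\tau_1,\tau_3)}\ \le\ \mathcal B:=C\bigl(1+\norm{v(\tau_1)}_V^2\bigr)^2 ,\qquad C=C(\nu,\eta,\norm{f}_H).
\]

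Next I would return to (\ref{Gronwall}), which in the notation $\bar E=\norm{v}_V$ reads $\tfrac{d}{dt}\bar E^2\le\tfrac{C}{\nu^3}(J^4+K^2\bar K^2)\bar E^2+\norm{f}_H^2$. Since by the previous step the coefficient $\tfrac{C}{\nu^3}(J^4+K^2\bar K^2)$ is an $L^1$ function of time on $[\tau_1,\tau_3]$ with an explicit bound, the integral form of Gronwall's inequality gives, for every $\tau_2\in[\tau_1,\tau_3]$,
\[
\norm{v(\tau_2)}_V^2\le\left(\norm{v(\tau_1)}_V^2+|\tau_2-\tau_1|\,\norm{f}_H^2\right)\exp\left(\frac{C}{\nu^3}\int_{\tau_1}^{\tau_2}\bigl(J^4+K^2\bar K^2\bigr)\,d\tau\right).
\]
Then, using $|\tau_2-\tau_1|\le1$, I would bound $\int_{\tau_1}^{\tau_2}J^4\,d\tau\le\sup_{[\tau_1,\tau_3]}J^4\le\mathcal B$ and $\int_{\tau_1}^{\tau_2}K^2\bar K^2\,d\tau\le\bigl(\sup_{[\tau_1,\tau_3]}K^2\bigr)\norm{\bar K}^2_{L^2_t}\le\mathcal B^2$, so the exponent is at most $\tfrac{C}{\nu^3}(\mathcal B+\mathcal B^2)\le C(1+\norm{v(\tau_1)}_V^2)^4$ with $C=C(\nu,\eta,\norm{f}_H)$; combining with $|\tau_2-\tau_1|\,\norm{f}_H^2\le\norm{f}_H^2$ yields exactly the asserted estimate with $\Gamma(\norm{v(\tau_1)}_V^2)=e^{C(1+\norm{v(\tau_1)}_V^2)^4}[\norm{v(\tau_1)}_V^2+\norm{f}_H^2]$.

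The only real subtlety --- more a conceptual point than a hard obstacle --- is that (\ref{IneqJandK}) by itself controls just the auxiliary quantities $J,K,\bar K$ over the interval and says nothing directly about $\sup_{[\tau_1,\tau_3]}\norm{v}_V$; the argument works because once those auxiliary quantities are pinned down in terms of $\norm{v(\tau_1)}_V$ alone, the coefficient in the honest energy inequality (\ref{Gronwall}) becomes an a priori bounded $L^1$ weight, and one Gronwall application then delivers a growth bound depending on $\norm{v(\tau_1)}_V$ and $\norm{f}_H$ \emph{only} --- the uniformity emphasized in the remark after Theorem \ref{absorbingforkicks}. Apart from that, the work is bookkeeping: fixing $\epsilon$, then the constants $C$, then $\eta$ in the right order, and carrying the embedding $V\hookrightarrow L^6$ carefully enough that the final exponent comes out as the fourth power of $1+\norm{v(\tau_1)}_V^2$ rather than a larger power.
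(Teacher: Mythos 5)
Your proposal is correct and follows essentially the same route as the paper: choose $\epsilon$ and then $\eta$ so that the parenthetical terms and the $\norm{\bar{K}}^2_{L^2_t}$ term in (\ref{IneqJandK}) can be absorbed, bound the resulting right-hand side by a polynomial in $\norm{v(\tau_1)}_V^2$ via $J(\tau_1),K(\tau_1)\leq C\bar{E}(\tau_1)$, and then run Gronwall on (\ref{Gronwall}) with the exponent controlled by $\sup J^4$ and $\sup K^2\cdot\norm{\bar{K}}^2_{L^2_t}$. Your explicit remark that $\sup_{[\tau_1,\tau_3]}J^4<\infty$ (from $v\in L^\infty([\tau_1,\tau_3];V)$) is needed to justify the absorption is a small point the paper leaves implicit, but it does not change the argument.
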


\begin{rk}
Note what this lemma says: Provided $\tau_1$ and $\tau_3$ are ``close enough'', the $V$-norm of $v$ only blows up so much in going from $\tau_1$ to $\tau_3$. In particular, if $\norm{v(\tau_1)}_V$ is finite then so too is $\sup_{\tau_1\leq\tau_2\leq\tau_3}\norm{v(\tau_2)}_V$. We will use this idea extensively.

Also note that the mapping $y\mapsto \Gamma(y)$ is non-decreasing and when $f\equiv 0$, $\lim_{y\to 0}\Gamma(y)=0$.
\end{rk}

\begin{proof}
First note that there exists an $\eta=\eta(C,\nu)$ small enough such that when $\tau_1,\tau_3$ satisfy (\ref{etacondition}), the conditions stated in the paragraph preceding the theorem are satisfied. Then, after absorbing terms into the left hand side we have
\begin{equation*}
\begin{split}
\frac{1}{2}\sup_{\tau_1\leq\tau\leq\tau_3}J^4(\tau)+\sup_{\tau_1\leq\tau\leq\tau_3}K^2(\tau)+\frac{\nu}{4}\norm{\bar{K}}^2_{L^2_t}\leq&\ J^4(\tau_1)+K^2(\tau_1)\\
&+ \frac{C}{\nu}\bar{E}(\tau_1)^2 + \frac{C}{\nu}\norm{\bar{E}}^2_{L^2_t}\\
&+\frac{C}{\nu^2}\bar{E}^4(\tau_1)+\frac{C}{\nu^2}\norm{\bar{E}}^4_{L^2_t}+\frac{C}{\nu}\norm{f}^2_{L^2_t}+\frac{C}{\nu^2}\norm{f}^4_{L^2_t}.
\end{split}
\end{equation*}
Next we note that $K(\tau_1)\leq \bar{E}(\tau_1)$ trivially and by a Sobolev inequality $J(\tau_1)\leq \bar{E}(\tau_1)$. From our assumption we have that $\norm{\bar{E}}^2_{L^2_t}\leq \eta$. And so, absorbing $\nu$, $\eta$, $\norm{f}_H$, and $C$ into a new constant $C$, we can rewrite our inequality as
\begin{equation*}
\frac{1}{2}\sup_{\tau_1\leq\tau\leq\tau_3}J^4(\tau)+\sup_{\tau_1\leq\tau\leq\tau_3}K^2(\tau)+\frac{\nu}{4}\norm{\bar{K}}^2_{L^2_t}\leq C\bar{E}(\tau_1)^2+ C\bar{E}^4(\tau_1)+ C=:(\star).
\end{equation*}
Applying Gronwall's lemma to (\ref{Gronwall}) and using this estimate we get
\begin{align*}
\norm{v(\tau_2)}_V^2\leq& e^{C\int_{\tau_1}^{\tau_3}\left(\norm{v}_{L^6}^4+\norm{\partial_z v}_H^2\norm{\nabla\partial_z v}_{H}^2\right)\ d\tau}\left[\norm{v(\tau_1)}_V^2+(\tau_3-\tau_1)\norm{f}_H^2\right]\\
\leq& e^{\left(C(\tau_3-\tau_1)(\star)+ C(\star)(\star)\right)}\left[\norm{v(\tau_1)}_V^2+(\tau_3-\tau_1)\norm{f}_H^2\right],
\end{align*}
from which the theorem follows.
\end{proof}
 
With Lemma \ref{MainLemma} in hand, the proof of global existence is simple: By the a priori estimate
\begin{equation}
\label{apriorienergy}
\int_0^{T_{\text{max}}}\bar{E}^2(\tau)\ d\tau < \infty
\end{equation}
which follows immediately from multiplying both sides of the original primitive equations by $v$ (see (\ref{multbyv}) for further details), we can partition the interval $[0,T_{\text{max}})$ into a finite number, say $L$, intervals of the form $[t_\ell,t_{\ell+1})$ where $t_\ell$ and $t_{\ell+1}$ satisfy the conditions of Lemma \ref{MainLemma} and $t_L=T_{\text{max}}$. Since $\norm{v(0)}_V<\infty$, after iterating Lemma \ref{MainLemma} $L$ times we get that
\begin{equation}
\label{boundsupnorm}
\sup_{t\in[0,T_{\text{max}})}\norm{v(t)}_V^2\leq \Gamma^{(L)}\left(\norm{v(0)}_V^2\right),
\end{equation}
where $\Gamma^{(L)}(\cdot)$ denotes the $L$-fold composition of the function $\Gamma(\cdot)$ defined in Lemma \ref{MainLemma}. This implies that $\norm{v(t)}_V$ does not blow up as $t$ approaches $T_{\text{max}}$. This completes the proof of the global existence of strong solutions. Note that while our estimates here are not as sharp as those in Kukavica and Ziane, Lemma \ref{MainLemma} will be instrumental in proving our main result and so we thought it helpful to present an application here.
 
\section{Proof of our Results}
\label{ProofOfResults} 
 
\subsection{Proof of Theorem \ref{goalTheorem}}

We first prove Theorem \ref{goalTheorem}: The assumption $\norm{v_0}^2_V\leq R<\infty$ implies $\norm{v_0}^2_H\leq R$ as well.

First we show that we have an absorbing ball in $H$. If we take the inner product of both sides of (\ref{thePEsAbstract}) by $v$ we get
$$
\frac{1}{2}\partial_t\norm{v}_H^2+\norm{v}_V^2\leq (\Pi_H f,v)_H\leq \frac{1}{2} \norm{v}_H^2 + \frac{1}{2}\norm{f}_H^2,
$$
whence
\begin{equation}
\label{multbyv}
\frac{1}{2}\partial_t\norm{v}_H^2+\frac{1}{2}\norm{v}_V^2\leq \frac{1}{2}\norm{f}_H^2,
\end{equation}
and so, by the Poincar\'e inequality,
$$
\partial_t\norm{v}_H^2\leq -\lambda_1\norm{v}_H^2+\norm{f}_H^2
$$
where $\lambda_1$ is the first eigenvalue of $A$. From this it follows from basic ODE theory that there exists a time $T_H=T_H(R,\norm{f}_H)$ such that
$$
\norm{v(t)}_H^2\leq K\text{ for all } t\geq T_H.
$$
Also from integrating both sides of (\ref{multbyv}) we have that
\begin{equation}
\label{piece1}
\int_s^t \norm{v(\tau)}_V^2\ d\tau \leq \norm{v(s)}_H^2+(t-s)\norm{f}_H^2.
\end{equation}
If we take $T_H\leq s\leq t\leq s+1$ the right hand side is bounded by $K+\norm{f}_H^2$. However this only gives us control of the integral of $\norm{v(t)}_V^2$ and not control pointwise. Nevertheless, we will get pointwise control after combining (\ref{piece1}) with Lemma \ref{MainLemma}.

Consider the series of times $T_H\leq T-2 < T$ where $T$ is otherwise arbitrary. From (\ref{piece1}) we have that
$$
\int_{T-2}^{T-1} \norm{v(\tau)}_V^2\ d\tau \leq K+\norm{f}_H^2.
$$
It follows that there exists a time $t_0\in[T-2,T-1]$ such that $\norm{v(t_0)}_V^2 \leq K+\norm{f}_H^2$. Considering (\ref{piece1}) again we see that
$$
\int_{t_0}^T \norm{v(\tau)}_V^2\ d\tau \leq K+2\norm{f}_H^2<\infty.
$$
Therefore, we can divide the interval $[t_0,T]$ into a finite number, say $L$, intervals of the form $[t_\ell,t_{\ell+1}]$ whose endpoints satisfy the conditions of Theorem \ref{MainLemma}. Note that the number of intervals $L$ needed depends only on the quantity $K+2\norm{f}_H^2$ and therefore is independent of $T$. It follows that we can get from  $t_0$ to $T$ in $L$ or fewer ``steps'' and so from Theorem \ref{MainLemma} we have that
$$
\norm{v(T)}_V^2\leq \Gamma^{(L)}\left(\norm{v(t_0)}_V^2\right)\leq \Gamma^{(L)}\left(K+\norm{f}_H^2\right).
$$
Since $T$ was arbitrary other than needing to be larger than $T_H$ by $2$, Theorem \ref{goalTheorem} is proven with $T_V:=T_H+2$ and $K$ replaced by $\Gamma^{(L)}\left(K+\norm{f}_H^2\right)$.

\subsection{Proof of Theorem \ref{absorbingforkicks}}

We now prove Theorem \ref{absorbingforkicks}: First we prove the same theorem but for $H$-norms. Arguing as in the previous proof but with $f\equiv 0$, we get the inequalities
\begin{equation}
\label{piece2}
\int_s^t \norm{v(\tau)}_V^2\ d\tau \leq \norm{v(s)}_H^2
\end{equation}
and
$$
\partial_t\norm{v}_H^2\leq -\lambda_1\norm{v}_H^2.
$$
It follows from basic ODE theory that for each $\epsilon>0$ and $v_0$ with $\norm{v_0}_H^2\leq R$, there exists a time $T_H=T_H(R,\epsilon)$ such that
$$
\norm{v(t)}_H^2\leq \epsilon\text{ for all } t\geq T_H
$$
as desired.

Next we consider the series of times $T_H\leq T-2 < T$ where $T$ is otherwise arbitrary. From (\ref{piece2}) we have that
$$
\int_{T-2}^{T-1} \norm{v(\tau)}_V^2\ d\tau \leq \epsilon.
$$
It follows that there exists a time $t_0\in[T-2,T-1]$ such that $\norm{v(t_0)}_V^2 \leq \epsilon$. Considering (\ref{piece2}) again we see that
$$
\int_{t_0}^{T} \norm{v(\tau)}_V^2\ d\tau \leq \epsilon<\infty.
$$
Therefore, we can divide the interval $[t_0,T]$ into a finite number, $L$, intervals of the form $[t_\ell,t_{\ell+1}]$ whose endpoints satisfy the conditions of Lemma \ref{MainLemma} (Note that any integer $L\geq\max(\frac{\epsilon}{\eta},2)$ will work and so we can take $L$ independent of $T$). It follows then that we can get from  $t_0$ to $T$ in $L$ or fewer ``steps'' and so from Lemma \ref{MainLemma} we have that
$$
\norm{v(T)}_V^2\leq \Gamma^{(L)}\left(\norm{v(t_0)}_V^2\right)\leq \Gamma^{(L)}\left(\epsilon\right).
$$
Since $T$ was arbitrary other than needing to be larger than $T_H$ by $2$, Theorem \ref{absorbingforkicks} now holds with $T_V\left(R,\Gamma^{(L)}\left(\epsilon\right)\right)=T_H(R,\epsilon)+2$. This implies the given statement of the theorem since $\Gamma^{(L)}\left(\epsilon\right)\to 0$ as $\epsilon\to 0$.

\subsection{Proof of Theorem \ref{InvariantMeasureKickForcing}}

First note that $S(t):V\to V$ is a compact operator for all $t>0$ (see Appendix \ref{A3}). Fix $v_0\in V$ such that $\norm{v_0}_V^2\leq R$, take $T=T_V(4R,R)$, and define the Markov process $X_n(\omega)$ by the relations (\ref{eqnforXn}).
Then it follows that if $\norm{X_n(\omega)}_V^2\leq 4R$ then
$$
\norm{X_{n+1}(\omega)}_V^2=\norm{S(T)[X_n(\omega)]+\eta_{n+1}}_V^2\leq2\norm{S(T)[X_n(\omega)]}_V^2+2\norm{\eta_{n+1}}_V^2\leq 2R+2R=4R
$$
as well. And so it follows by induction that $\norm{X_n(\omega)}_V^2\leq 4R$ for all $n$. 

Let $\mu_n\in\M(V)$ be the distribution of $X_n$ and note that $\mu_{n+1}=\P^*\mu_n$. It follows from the above argument that each $\mu_n$ is supported on the compact set $S(T)\left[B_V(2\sqrt{R})\right]+B_{\D(A)}(\sqrt{R})$ where $B_V(\rho):=\{v\in V: \norm{v}_V\leq \rho\}$ and $B_{\D(A)}(\rho):=\{v\in V: \norm{Av}_H\leq \rho\}$. Therefore, the sequence of measures $\{\mu_n\}\subset \M(V)$ is tight.

We now use the standard method of Krylov and Bugolybov to show the existence of an invariant measure, taking advantage of the tightness of the $\mu_n$ and the Feller property of our Markov process. By tightness, there exists a subsequence of the $\mu_n$ (which we relabel to be the sequence of natural numbers) and there exists a measure $\mu\in \M(V)$, such that $\mu_n\rightharpoonup \mu$. Next we define the measures $\bar{\mu}_n = \frac{1}{n}\sum_{k=1}^n \mu_k$. We have then also that $\bar{\mu}_n\rightharpoonup\mu$.

We claim that $\mu$ is invariant: Given $f\in C_b(V)$,
\begin{align*}
\langle f,\P^*\mu \rangle =& \langle \P f,\mu \rangle = \lim_{n\to\infty} \langle \P f,\bar{\mu}_n \rangle= \lim_{n\to\infty} \frac{1}{n}\sum_{k=1}^n\langle \P f,\mu_k \rangle\\
                         =& \lim_{n\to\infty} \frac{1}{n}\sum_{k=1}^n\langle f,\mu_{k+1} \rangle= \lim_{n\to\infty} \frac{1}{n}\left(\sum_{k=1}^n\langle f,\mu_{k}\rangle-\langle f,\mu_0\rangle+\langle f,\mu_{n+1}\rangle\right)\\
                         =& \lim_{n\to\infty} \langle f,\bar{\mu}_n \rangle = \langle f,\mu \rangle,                     
\end{align*}
whence $P^*\mu=\mu$ (Note that for the second inequality we needed the Feller property to ensure that $\P f\in C_b(V)$).

\section{Acknowledgments}  
We thank the second author's advisor, Stamatis Dostoglou, both for suggesting this problem and his helpful conversations and references to the literature.


\appendix
\section{Appendix}

We collect here a few regularity results that were not proved in~\cite{KZ} but which we need. These results were proved in~\cite{Ju} for a different set of boundary conditions but the proofs carry over to our setting.

\subsection{Proof that $v\in C([0,T];V)$}
\label{contintimeSection}

\begin{lem}
\label{stdcont}
Assume that $H$ and $V$ are Hilbert spaces such that the embedding $V \rightarrow H$ is compact.  Let $V'$ be the dual of $V$ and identify $H$ with $H'$ so that $V \subset H \cong H' \subset V'$.  If $v \in L^2(0,T;V)$ and $\partial_t v \in L^2(0,T;V')$, then $v\in C(0,T;H)$ (more precisely, $v = u$ a.e. for some $u \in C([0,T];H)$).
\end{lem}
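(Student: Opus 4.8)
The plan is to prove the (standard) assertion by the classical time-regularization argument, following the treatments of this lemma in Temam and in Evans's PDE book. Implicit in the chain $V\subset H\cong H'\subset V'$ is that $V$ is dense in $H$ (this holds in our application, where $V=\overline{\mathcal V}^{H^1}$ and $H=\overline{\mathcal V}^{L^2}$), so that the pairing $\langle\cdot,\cdot\rangle_{V',V}$ restricts to $(\cdot,\cdot)_H$ on $H\times V$; I would note this at the outset. The first substantive step is to extend $v$ in time: reflecting about the endpoints, set $\bar v(t):=v(-t)$ for $t\in(-\sigma,0)$ and $\bar v(t):=v(2T-t)$ for $t\in(T,T+\sigma)$ for some $\sigma>0$, which produces $\bar v\in L^2(-\sigma,T+\sigma;V)$ with $\partial_t\bar v\in L^2(-\sigma,T+\sigma;V')$. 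The one point requiring care is that no spurious distributional derivative appears at $t=0$ or $t=T$: this is exactly because $v$ already has a $V'$-continuous representative (since $\partial_t v\in L^2(0,T;V')\subset L^1(0,T;V')$ gives $v(t)=v(a)+\int_a^t\partial_t v$ in $V'$), so the reflected pieces match the endpoint values and introduce no jump.

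Next I would mollify in time. Let $v^\epsilon:=\eta_\epsilon*\bar v$; then $v^\epsilon\in C^\infty([0,T];V)$, $v^\epsilon\to v$ in $L^2(0,T;V)$, and $(v^\epsilon)'\to\partial_t v$ in $L^2(0,T;V')$ as $\epsilon\to0$. For a function that is $C^1$ in time with values in $V$ one has the elementary identity $\tfrac{d}{dt}\norm{v^\epsilon(t)}_H^2=2\langle(v^\epsilon)'(t),v^\epsilon(t)\rangle_{V',V}$, and likewise for the difference $v^\epsilon-v^\delta$. Integrating that identity over $\tau\in[s,t]$ and then integrating in $s$ over $[0,T]$ yields
$$
T\,\norm{v^\epsilon(t)-v^\delta(t)}_H^2\le\int_0^T\norm{v^\epsilon(s)-v^\delta(s)}_H^2\,ds+2\int_0^T\!\!\int_0^T\left|\langle(v^\epsilon-v^\delta)'(\tau),(v^\epsilon-v^\delta)(\tau)\rangle\right|d\tau\,ds .
$$
By Cauchy--Schwarz and the two convergences just noted, the right-hand side tends to $0$ as $\epsilon,\delta\to0$, uniformly in $t\in[0,T]$. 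Hence $\{v^\epsilon\}$ is Cauchy in $C([0,T];H)$ and converges there to some $u\in C([0,T];H)$; since $v^\epsilon\to v$ in $L^2(0,T;H)$ as well, $u=v$ a.e., which is the claim. (As a byproduct, passing to the limit in $\norm{v^\epsilon(t)}_H^2=\norm{v^\epsilon(s)}_H^2+2\int_s^t\langle(v^\epsilon)',v^\epsilon\rangle$ — now legitimate for every $s,t$ since $v^\epsilon\to u$ uniformly in $H$ — gives the energy identity $\tfrac{d}{dt}\norm{v(t)}_H^2=2\langle\partial_t v(t),v(t)\rangle$, which is worth recording for later use.)

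The main obstacle is the first step: organizing the extension so that the time-mollification is valid up to and including the endpoints without producing a singular contribution to $\partial_t v$ at $t=0$ or $t=T$. Once that is done, the remainder is the routine "integrate once in $\tau$, once in $s$" device that upgrades the a.e.\ energy identity to a uniform estimate. I would also remark that the \emph{compactness} of $V\hookrightarrow H$ is not actually used here; only the continuous dense embedding is needed.
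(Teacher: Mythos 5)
Your argument is correct, but note that the paper does not actually prove this lemma: it simply cites it as standard (Chapter 3, Lemma 1.2 of Temam's book), and what you have written is precisely the classical extension--mollification proof given there (and in Evans, \S 5.9.2), including the one genuinely delicate point, namely that the reflected extension acquires no singular part of $\partial_t v$ at $t=0,T$ because $v$ already has a $V'$-continuous representative. Your closing remark is also accurate and worth keeping: only the continuous dense embedding $V\hookrightarrow H$ is used, so the compactness hypothesis in the statement is superfluous for this lemma (it is needed elsewhere, e.g.\ in the Aubin--Lions argument of Appendix A.3).
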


This is a standard lemma (see Chapter 3, Lemma 1.2 of~\cite{TemamBook}).

\begin{thm}
\label{contintime}
     If $v$ is a strong solution on $[0,T]$, $T>0$, then $v \in C([0,T];V)$.
\end{thm}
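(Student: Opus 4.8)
The plan is to upgrade the regularity statement from Lemma \ref{stdcont} (which would only give $v\in C([0,T];H)$ when applied naively) to continuity into the stronger space $V$, by applying an abstract continuity lemma at one level up in the Sobolev scale. Concretely, I would invoke Lemma \ref{stdcont} with the roles of the Hilbert spaces played by $\D(A)$, $V$, and $V^{-1}$ (the dual of $\D(A)$ relative to the pivot space $V$), rather than $V$, $H$, $V'$. The embedding $\D(A)\hookrightarrow V$ is compact (this follows, e.g., from the compactness of $V^2\hookrightarrow V$, which is standard since $G$ is a bounded domain and the $V$-norm is equivalent to the $H^1$-norm), so the abstract lemma applies once we verify the two hypotheses: $v\in L^2(0,T;\D(A))$ and $\partial_t v\in L^2(0,T;V^{-1})$.

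The first hypothesis is immediate from the definition of strong solution: a strong solution satisfies $v\in L^2([0,T];\D(A))$ by definition. For the second hypothesis, I would start from the equation in the form (\ref{thePEsAbstract}), namely $\partial_t v = -\nu A v - B(v,v) + \Pi_H f$, and estimate each term in the $V^{-1}$-norm (equivalently, in the $\D(A)$-dual norm). The linear term: $\|Av\|_{V^{-1}} \leq C\|v\|_{V}$ since $A:\D(A)\to H$ extends to a bounded map $V\to V^{-1}$, and $v\in L^2(0,T;V)$. The forcing term is constant in $H\subset V^{-1}$. The nonlinear term $B(v,v)$ is the one requiring care: I would bound $\|B(v,v)\|_{V^{-1}}$ by pairing against a test function $w\in\D(A)$, integrating by parts to move derivatives off $v$ and onto $w$, and using Sobolev/Hölder estimates (of the type already invoked throughout Section \ref{KZProof}) together with $v\in L^\infty(0,T;V)$ — which holds for strong solutions — to show $B(v,v)\in L^2(0,T;V^{-1})$, or even $L^2(0,T;H)$, which suffices. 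Actually, since a strong solution already has $\partial_t v\in L^2([0,T];H)$ by definition, and $H\hookrightarrow V^{-1}$ continuously, the second hypothesis of the abstract lemma is in fact free: $\partial_t v\in L^2(0,T;H)\subset L^2(0,T;V^{-1})$.

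So the proof reduces to: (i) note $\D(A)\hookrightarrow V$ compactly; (ii) note that a strong solution satisfies $v\in L^2(0,T;\D(A))$ and $\partial_t v\in L^2(0,T;H)\subset L^2(0,T;V^{-1})$, both by definition; (iii) apply Lemma \ref{stdcont} with the triple $(\D(A), V, V^{-1})$ in place of $(V,H,V')$ to conclude $v\in C([0,T];V)$. The main obstacle — and it is a mild one — is checking that the hypotheses of the abstract lemma transfer correctly to this shifted triple, in particular that $V$ is a valid pivot space between $\D(A)$ and its dual, that $A$ is self-adjoint and positive so $\D(A)$ with the graph norm is a genuine Hilbert space compactly embedded in $V$, and that the pairing $\langle\partial_t v, w\rangle$ for $w\in\D(A)$ is consistent with the $H$-inner product when $\partial_t v\in H$. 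All of these are standard facts about the Stokes-type operator $A$; once they are in place, the theorem follows immediately.
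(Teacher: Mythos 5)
Your proposal is correct and rests on the same core idea as the paper's proof: apply the abstract continuity lemma one rung up the Sobolev ladder. The two presentations are unitarily equivalent --- the paper applies Lemma \ref{stdcont} to $A^{\frac{1}{2}}v$ with the standard triple $(V,H,V')$, while you apply it to $v$ itself with the shifted triple $(\D(A),V,\D(A)')$; since $A^{\frac{1}{2}}$ is an isometric isomorphism between corresponding levels of the scale, these are the same argument in different coordinates. The one genuine difference is how the time-derivative hypothesis is verified. You observe that it is free: a strong solution has $\partial_t v\in L^2(0,T;H)$ by definition, and $H$ embeds continuously into the dual of $\D(A)$ taken with pivot $V$ (via $w\mapsto (g,Aw)_H$, which agrees with $(g,w)_V$ for $g\in V$), so nothing needs to be estimated. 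The paper instead tests the equation against $A^{\frac{1}{2}}v_1$ and derives the explicit quantitative bound (\ref{helpful2}) on $\norm{\partial_t(A^{\frac{1}{2}}v)}_{V'}$ in terms of $\norm{Av}_H$, $\norm{v}_V$, and $\norm{f}_H$. For Theorem \ref{contintime} alone your shortcut is cleaner, but the paper's extra work is not wasted: inequality (\ref{helpful2}) is reused in Appendix \ref{A3} to obtain the second bound of (\ref{tobeshown}) \emph{uniformly over a bounded family of initial data}, which is what the compactness proof for $S(t)$ requires and which the qualitative statement $\partial_t v\in L^2(0,T;H)$ does not by itself supply. The standard facts you flag (self-adjointness and positivity of $A$, compactness of $\D(A)\hookrightarrow V$ via elliptic regularity and Rellich, consistency of the pairing with the pivot $V$) are indeed all that remains to check, and they hold here.
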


\begin{proof}
It suffices to show that 
\begin{equation}
\label{helpful}
     A^{\frac{1}{2}} v \in L^2(0,T;V) \quad \text{ and } \quad 
     \partial_t \big( A^{\frac{1}{2}} v \big) \in L^2(0,T;V'),
\end{equation}
since Lemma \ref{stdcont} will then imply that $A^{\frac{1}{2}} v \in C([0,T];H)$.  Since $\| v \|_V = \| A^{\frac{1}{2}} v \|_H$, this will imply that $v \in C([0,T];V)$.

The first containment is obvious since $v \in L^2(0,T;D(A))$ and $\| v \|_V = \| A^{\frac{1}{2}} v \|_H$.  For the second, let $v_1 \in V$ and multiply (\ref{thePEsAbstract}) by $A^{\frac{1}{2}} v_1$ and integrate to get
\begin{equation*}
     \langle \partial_t v, A^{\frac{1}{2}} v_1 \rangle_H
          =
           -\nu \langle A v, A^{\frac{1}{2}} v_1 \rangle_H
           - \langle B(v,v),A^{\frac{1}{2}} v_1 \rangle_H
           + \langle f, v_1 \rangle_H.
\end{equation*}
By Holder's inequality and the Ladyzenskaya inequalities, we have
\begin{equation*}
\begin{split}
     \langle \partial_t v, A^{\frac{1}{2}} v_1 \rangle_H
          &\leq
           \nu \| A v \|_H \| v_1 \|_V
           +C \| v \|_V^{\frac{3}{2}} \| A v \|_H^{\frac{1}{2}} \| v_1 \|_V \\
           &\quad +C \| v \|_V \| A v \|_H \| v_1 \|_V
           + \| f \|_H \| v_1 \|_V.
\end{split}
\end{equation*}
Write the left hand side as $\langle \partial_t \big( A^{\frac{1}{2}} v \big), v_1 \rangle$, divide by $\| v_1 \|_V$, and take the supremum over all $v_1 \in V$ to get 
\begin{equation}
\label{helpful2}
\begin{split}
     \| \partial_t \big( A^{\frac{1}{2}} v \big) \|_{V'}
     \leq
       \nu \| A v \|_H
       +C \| v \|_V^{\frac{3}{2}} \| A v \|_H^{\frac{1}{2}}
       & +C \| v \|_V \| A v \|_H 
       + \| f \|_H 
\end{split}
\end{equation}
Since the right hand side is square-integrable, we have shown that $\partial_t \big( A^{\frac{1}{2}} v \big) \in L^2(0,T;V')$.
\end{proof}

\subsection{Proof that $v\mapsto S(t)v$ is continuous}
\label{A2}

\begin{thm}
\label{appendixcont}
Given, $v_1$ and $v_2$, two strong solutions to (\ref{thePEsMod}), there exists $C=C(\nu,v_1(0),v_2(0),t)$ such that
\begin{equation*}
     \| v_1(t) - v_2(t) \|_V \leq C \| v_1(0) - v_2(0) \|_V.
\end{equation*}
In particular, $v\in V\mapsto S(t)v\in V$ is continuous for all $t>0$.
\end{thm}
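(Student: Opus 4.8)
The plan is to run the standard difference-function argument: set $w := v_1 - v_2$ and $q := p_1 - p_2$, subtract the abstract equations~(\ref{thePEsAbstract}) for $v_1$ and $v_2$, and estimate $\norm{w}_V$ via an energy identity obtained by pairing the difference equation with $Aw$. Writing the nonlinearity in the bilinear form $B$, the difference of $B(v_1,v_1)$ and $B(v_2,v_2)$ splits as $B(w,v_1) + B(v_2,w)$, so the difference equation reads $\partial_t w + \nu A w + B(w,v_1) + B(v_2,w) = 0$. Pairing with $Aw$ in $H$ gives
\begin{equation*}
\frac{1}{2}\frac{d}{dt}\norm{w}_V^2 + \nu\norm{Aw}_H^2 = -\langle B(w,v_1) + B(v_2,w), Aw\rangle_H.
\end{equation*}
First I would bound the right-hand side using the same Hölder/Ladyzhenskaya/Sobolev estimates already invoked for~(\ref{helpful2}) in the proof of Theorem~\ref{contintime}: each trilinear term is controlled by a product of $V$- and $\D(A)$-norms of $v_1, v_2$ and of $w$, and after a Young's-inequality split one absorbs $\epsilon\nu\norm{Aw}_H^2$ into the left side. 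The outcome is a differential inequality of the form
\begin{equation*}
\frac{d}{dt}\norm{w}_V^2 + \nu\norm{Aw}_H^2 \leq g(\tau)\,\norm{w}_V^2,
\end{equation*}
where $g(\tau)$ is a polynomial expression in $\norm{v_1(\tau)}_V$, $\norm{v_2(\tau)}_V$, $\norm{Av_1(\tau)}_H$, $\norm{Av_2(\tau)}_H$.

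The key point that makes this work is that $g \in L^1(0,T)$. Indeed, since $v_1$ and $v_2$ are strong solutions on $[0,T]$ we have $v_i \in L^\infty(0,T;V) \cap L^2(0,T;\D(A))$, so $\norm{v_i}_V$ is essentially bounded and $\norm{Av_i}_H^2$ is integrable on $[0,T]$; any term in $g$ is a product of an $L^\infty$ factor and at most a quadratic-in-$\norm{Av_i}_H$ factor, hence integrable. Then Gronwall's inequality yields
\begin{equation*}
\norm{w(t)}_V^2 \leq \norm{w(0)}_V^2 \exp\!\left(\int_0^t g(\tau)\,d\tau\right),
\end{equation*}
and setting $C = C(\nu, v_1(0), v_2(0), t)$ equal to the square root of the exponential factor (which depends on the solutions only through the a priori bounds, and those in turn depend only on the initial data and $\nu$ via the global existence theory of~\cite{KZ}) gives the claimed estimate $\norm{v_1(t) - v_2(t)}_V \leq C\norm{v_1(0) - v_2(0)}_V$. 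Continuity of $v \mapsto S(t)v$ on $V$ is then immediate: if $v^{(k)}_0 \to v_0$ in $V$, apply the estimate with $v_1 = S(\cdot)v^{(k)}_0$ and $v_2 = S(\cdot)v_0$; one only needs that the constant stays bounded as $v^{(k)}_0$ ranges over a $V$-bounded set, which follows because the a priori bounds on strong solutions depend only on $\norm{v^{(k)}_0}_V \leq R$ (again by~\cite{KZ}, or by~(\ref{boundsupnorm}) and the energy estimate~(\ref{apriorienergy})).

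The main obstacle is the pressure term: unlike the weak formulation, pairing with $Aw$ does not automatically annihilate $\nabla_2 q$, so one must argue that $\langle \nabla_2 q, Aw\rangle_H = 0$, or more precisely work with the projected equation~(\ref{thePEsAbstract}) where $\Pi_H$ has already removed the gradient part and $Aw \in H$ so the pairing is legitimate — this is exactly the subtlety flagged in the paper's footnote about~\cite{CT}, and here it is handled cleanly because we never leave the space $H$. The only other place requiring care is checking that each trilinear term involving $B$ can indeed be estimated so that the troublesome $\norm{Aw}_H$ factors appear only to the first power (so Young's inequality can absorb them) and that no uncontrolled $\norm{Av_i}_H$ power higher than two survives in $g$; the estimates underlying~(\ref{helpful2}) show this is the case. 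I would cite~\cite{KZ} for the precise form of these nonlinear estimates rather than reproduce them.
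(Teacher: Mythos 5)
Your proposal is correct and follows essentially the same route as the paper's own proof: the same difference equation $\partial_t w + \nu Aw + B(w,v_1) + B(v_2,w) = 0$, the same pairing with $Aw$, the same H\"older/Ladyzhenskaya/Young estimates leading to a Gronwall inequality with an $L^1$ coefficient built from $\norm{v_i}_V$ and $\norm{Av_i}_H$. Your added remarks on the pressure term (handled by working in the projected equation) and on the uniformity of the constant over $V$-bounded sets of initial data are consistent with, and slightly more explicit than, what the paper records.
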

\begin{proof}
$w := v_1 - v_2$ satisfies the equation
\begin{equation*}
     \partial_t w + \nu A w + B(w,v_1) + B(v_2,w) = 0.
\end{equation*}
Multiply by $Aw$ and integrate to get
\begin{equation*}
     \frac{1}{2} \frac{d}{dt} \| w \|_V^2
     +
     \nu \| Aw \|_H^2
     =
     -\langle B(w,v_1),Aw \rangle_H
     -\langle B(v_2,w),Aw \rangle_H.
\end{equation*}
Now estimate the right side using Holder's inequality and the Ladyzenskaya inequalities.
\begin{equation*}
\begin{split}
     \langle B(w,v_1),Aw \rangle_H
     &\leq
     C \| w \|_H^{\frac{1}{4}} \| w \|_V^{\frac{3}{4}} \| v_1 \|_V^{\frac{1}{4}} \| v_1 \|_{V^2}^{\frac{3}{4}} \| Aw \|_H \\
     &\quad +
     C \| w \|_V^{\frac{1}{2}} \| w \|_{V^2}^{\frac{1}{2}} \| v_1 \|_V^{\frac{1}{2}} \| v_1 \|_{V^2}^{\frac{1}{2}} \| Aw \|_H.
\end{split}
\end{equation*}
By Poincare's inequality, Young's inequality, and the fact that $\| v \|_{V^2} \leq C \| Av \|_H$, we have that
\begin{equation*}
\begin{split}
     \langle B(w,v_1),Aw \rangle_H
     &\leq
     \frac{\nu}{2} \| Aw \|_H^2
     +
     \frac{C}{\nu} \| v_1 \|_V^{\frac{1}{2}} \| v_1 \|_{V^2}^{\frac{3}{2}} \| w \|_V^2
     +
     \frac{C}{\nu^3} \| v_1 \|_V^2 \| v_1 \|_{V^2}^2 \| w \|_V^2.
\end{split}
\end{equation*}
By a similar argument, we also have
\begin{equation*}
\begin{split}
     \langle B(v_2,w),Aw \rangle_H
     &\leq
     \frac{\nu}{2} \| Aw \|_H^2
     +
     \frac{C}{\nu^7} \| v_2 \|_V^8 \| w \|_V^2
     +
     \frac{C}{\nu^3} \| v_2 \|_V^2 \| v_2 \|_{V^2}^2 \| w \|_V^2.
\end{split}
\end{equation*}
Therefore,
\begin{equation*}
\begin{split}
     \frac{d}{dt} &\| w \|_V^2 \\
     &\leq
     \Big(
          \frac{C}{\nu} \| v_1 \|_V^{\frac{1}{2}} \| v_1 \|_{V^2}^{\frac{3}{2}}
          +
          \frac{C}{\nu^3} \| v_1 \|_V^2 \| v_1 \|_{V^2}^2
          +
          \frac{C}{\nu^7} \| v_2 \|_V^8
          +
          \frac{C}{\nu^3} \| v_2 \|_V^2 \| v_2 \|_{V^2}^2
     \Big)
     \| w \|_V^2.
\end{split}
\end{equation*}
Since the quantity inside the parentheses is integrable on $[0,t]$, Gronwall's inequality completes the proof.
\end{proof}

\subsection{Proof that $S(t):V\to V$ is a compact operator}
\label{A3}

We first state a lemma which is just a special case of the Aubin-Lions compactness lemma (see Proposition 1.3 of~\cite{ShowalterBook} for that lemma and its proof).

\begin{lem} 
\label{Aubin}
Let
$$
\H := \{v(t): v(t)\in L^2(0,T;V),\ \dot{v}(t)\in L^2(0,T;V')\},
$$
with the norm $\norm{v}_{\H}:=\norm{v}_{L^2(0,T;V)}+\norm{\dot{v}}_{L^2(0,T;V')}$. Then $\H$ is compactly embedded into $L^2(0,T;H)$.
\end{lem}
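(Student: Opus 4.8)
\textbf{Proof plan for Lemma \ref{Aubin}.} The statement is an instance of the Aubin--Lions--Simon compactness lemma, so the plan is simply to verify the hypotheses of that general theorem in our specific setting and then quote it. The abstract lemma requires three Banach spaces $X_0 \hookrightarrow X \hookrightarrow X_1$ with $X_0$ compactly embedded in $X$ and $X \hookrightarrow X_1$ continuous; under these hypotheses the set of $v$ with $v \in L^p(0,T;X_0)$ and $\dot v \in L^q(0,T;X_1)$ (here $p=q=2$) is compactly embedded in $L^p(0,T;X)$.

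First I would identify $X_0 = V$, $X = H$, and $X_1 = V'$. The continuous embedding $H \hookrightarrow V'$ is immediate from the Gelfand triple structure $V \subset H \cong H' \subset V'$ already set up in the excerpt (every element of $H$ acts on $V$ by the inner product, and this is continuous). The compact embedding $V \hookrightarrow H$ is the one genuinely nontrivial input: it says bounded sequences in $V = \overline{\V}^{H^1}$ have $L^2$-convergent subsequences. This follows from the Rellich--Kondrachov theorem, since $V \subset H^1(G)$ and $G$ is a bounded Lipschitz (indeed, smooth) domain, so the inclusion $H^1(G) \hookrightarrow L^2(G)$ is compact, and $H$ carries the $L^2$-norm; one just notes that a $V$-bounded sequence is $H^1$-bounded (the $V$-norm is equivalent to the full $H^1$-norm by Poincar\'e, as remarked in the excerpt) and that its $L^2$-limit again lies in the closed subspace $H$. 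With these two facts in hand, Proposition 1.3 of~\cite{ShowalterBook} applies verbatim with $p=q=2$ and gives the compact embedding $\H \hookrightarrow L^2(0,T;H)$, where the norm on $\H$ is exactly $\norm{v}_{L^2(0,T;V)} + \norm{\dot v}_{L^2(0,T;V')}$ as stated.

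I do not expect any real obstacle here: the lemma is explicitly flagged in the excerpt as ``just a special case of the Aubin--Lions compactness lemma,'' so the only content is bookkeeping --- matching our $V, H, V'$ to the abstract hypotheses and citing the compactness of $V \hookrightarrow H$. If anything is delicate, it is making sure that the embedding $V \hookrightarrow H$ really is compact rather than merely continuous; but this is standard for $H^1$-type spaces on bounded smooth domains and can simply be cited alongside Rellich--Kondrachov, or attributed to the functional-analytic setup of~\cite{LTW} and~\cite{HTZ2} referenced earlier. Hence the proof is essentially one paragraph: verify the embeddings, invoke~\cite{ShowalterBook}.
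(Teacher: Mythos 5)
Your proposal is correct and matches the paper's treatment: the paper gives no proof at all, simply citing Proposition 1.3 of~\cite{ShowalterBook} as the general Aubin--Lions lemma of which this is a special case. Your verification of the hypotheses (compactness of $V \hookrightarrow H$ via Rellich--Kondrachov and equivalence of the $V$-norm with the $H^1$-norm, continuity of $H \hookrightarrow V'$ from the Gelfand triple) is exactly the bookkeeping the paper leaves implicit, and it is sound.
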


The following theorem is a central result of~\cite{Ju}. We reproduce the proof here but with slightly different estimates which take into account our different boundary conditions.

\begin{thm}
Let $S(t):V\to V$ be the solution operator to the unforced primitive equations (\ref{thePEsMod}). Then for each $t>0$, $S(t)$ is a compact operator.
\end{thm}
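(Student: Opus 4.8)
The plan is to show that $S(t)$ maps bounded sets in $V$ to precompact sets in $V$ by combining the global-in-time estimates available to us with the Aubin--Lions lemma (Lemma \ref{Aubin}), but applied at the level of $A^{1/2}v$ rather than $v$ itself, so that compactness in $L^2$ translates into compactness in $V$. Fix $t>0$ and let $\{v_{0,m}\}$ be a bounded sequence in $V$, say $\norm{v_{0,m}}_V^2\le R$ for all $m$, and write $v_m(s):=S(s)[v_{0,m}]$ for the corresponding strong solutions. The first step is to produce uniform bounds: by Theorem \ref{goalTheorem} (with $f\equiv 0$, or directly by Theorem \ref{absorbingforkicks}) together with the iteration of Lemma \ref{MainLemma} used in its proof, we have $\sup_{s\in[0,t]}\norm{v_m(s)}_V^2\le \Gamma^{(L)}(R')=:M$ for a constant $M=M(R,t)$ independent of $m$, where $R'$ and $L$ are as in those proofs. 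Feeding $\sup_s\norm{v_m(s)}_V\le \sqrt M$ back into inequality (\ref{Gronwall}) and integrating over $[0,t]$ yields a uniform bound $\int_0^t \norm{Av_m(s)}_H^2\, ds\le M'$, i.e. $\{v_m\}$ is bounded in $L^2(0,t;\D(A))$. Equivalently, $\{A^{1/2}v_m\}$ is bounded in $L^2(0,t;V)$.

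The second step is the time-derivative bound for $A^{1/2}v_m$. This is exactly the computation carried out in the proof of Theorem \ref{contintime}: inequality (\ref{helpful2}) gives
$$
\norm{\partial_t(A^{1/2}v_m)}_{V'}\le \nu\norm{Av_m}_H + C\norm{v_m}_V^{3/2}\norm{Av_m}_H^{1/2}+C\norm{v_m}_V\norm{Av_m}_H,
$$
and the right-hand side is bounded in $L^2(0,t)$ uniformly in $m$ by the estimates of Step 1 (using $\sup_s\norm{v_m(s)}_V\le\sqrt M$ and $\int_0^t\norm{Av_m}_H^2\,ds\le M'$, plus Young's inequality on the cross term). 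Hence $\{A^{1/2}v_m\}$ is bounded in the space $\H$ of Lemma \ref{Aubin}. By that lemma, after passing to a subsequence, $A^{1/2}v_m\to w$ strongly in $L^2(0,t;H)$ for some $w$, and by interpolating the $L^\infty(0,t;H)$ bound on $A^{1/2}v_m$ (which holds since $v_m\in C([0,t];V)$ uniformly) one upgrades to convergence at the endpoint: a standard argument (e.g. passing to a further subsequence so that $A^{1/2}v_m(s)$ converges strongly in $H$ for a.e. $s$, then using continuity in time and a uniform equicontinuity estimate from the $V'$ bound on $\partial_t(A^{1/2}v_m)$) shows $A^{1/2}v_m(t)$ converges strongly in $H$. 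Since $\norm{\cdot}_V=\norm{A^{1/2}\cdot}_H$, this says $v_m(t)=S(t)[v_{0,m}]$ converges strongly in $V$, which is precisely the precompactness we wanted.

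The main obstacle is the last point: Aubin--Lions as stated gives only $L^2(0,t;H)$-convergence of $A^{1/2}v_m$ on the whole interval, not convergence of $A^{1/2}v_m(t)$ at the single time $t$. One must therefore run the standard Lions--Magenes / Arzel\`a--Ascoli refinement: the bounds $A^{1/2}v_m\in L^2(0,t;V)$ and $\partial_t(A^{1/2}v_m)\in L^2(0,t;V')$ make $\{A^{1/2}v_m\}$ equicontinuous as maps into $V'$ (hence into $H$ after an interpolation inequality with the uniform $L^\infty(0,t;H)$ bound), so $\{A^{1/2}v_m\}$ is precompact in $C([0,t];H)$ and in particular $\{A^{1/2}v_m(t)\}$ is precompact in $H$; compactness of $S(t)$ follows. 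Everything else is a bookkeeping exercise assembling estimates already established in the excerpt. Since $S(t)[0]=0$ is linear-looking only superficially, we do not need linearity: precompactness of the image of every bounded set is exactly the definition of a compact (nonlinear) map here, as used in the proof of Theorem \ref{InvariantMeasureKickForcing}.
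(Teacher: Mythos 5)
Your first two steps match the paper exactly: you establish that $\{A^{1/2}v_m\}$ is bounded in the space $\H$ of Lemma \ref{Aubin} (uniform $\sup_s\norm{v_m(s)}_V^2$ via iteration of Lemma \ref{MainLemma}, then $\int_0^t\norm{Av_m}_H^2\,ds\leq M'$ from (\ref{Gronwall}), then the time-derivative bound from (\ref{helpful2})), and you apply Aubin--Lions at the level of $A^{1/2}v_m$ to get strong convergence in $L^2(0,t;H)$. The problem is your final step, where you try to upgrade to convergence of $A^{1/2}v_m(t)$ at the single endpoint time $t$. Your proposed mechanism --- equicontinuity of $A^{1/2}v_m$ into $V'$, ``hence into $H$ after an interpolation inequality with the uniform $L^\infty(0,t;H)$ bound'' --- does not work. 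To pass from $V'$-equicontinuity to $H$-equicontinuity by interpolation you need $\norm{w}_H\leq\norm{w}_{V'}^{1/2}\norm{w}_{V}^{1/2}$ applied to differences $w=A^{1/2}v_m(s_2)-A^{1/2}v_m(s_1)$, which requires a \emph{pointwise-in-time} uniform bound on $\norm{A^{1/2}v_m(s)}_V=\norm{Av_m(s)}_H$; the available estimate is only $L^2$ in time. The $L^\infty(0,t;H)$ bound you invoke controls the middle space of the triple $V\subset H\subset V'$ and cannot be interpolated against the $V'$ bound to produce the $H$ norm. For the same reason the Aubin--Lions--Simon theorem with these hypotheses (bounded in $L^2(0,t;V)\cap L^\infty(0,t;H)$ with derivative in $L^2(0,t;V')$) yields relative compactness in $L^2(0,t;H)$ and in $C([0,t];V')$, but \emph{not} in $C([0,t];H)$; the $C([0,t];B)$ conclusion with $B=H$ would require boundedness in $L^\infty(0,t;V)$ of $A^{1/2}v_m$, i.e.\ a uniform $\D(A)$ bound on $v_m(s)$, which is not established. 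What your argument actually delivers at the endpoint is convergence of $A^{1/2}v_m(t)$ in $V'$, i.e.\ of $v_m(t)$ in $H$ --- not in $V$.

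The paper closes this gap by a different and cleaner device: from $L^2(0,t;V)$-convergence of $v_m$, pass to a further subsequence converging in $V$ at almost every time, pick one such time $\tau^*<t$, and then write $S(t)v_m=S(t-\tau^*)S(\tau^*)v_m$ and invoke the $V$-continuity of the solution map $S(t-\tau^*)$ (Theorem \ref{appendixcont}) to push the convergence forward to time $t$. You should replace your Arzel\`a--Ascoli endpoint argument with this semigroup step (or, alternatively, prove a genuine parabolic smoothing estimate bounding $\norm{Av_m(t)}_H$ uniformly for fixed $t>0$, which would give compactness directly via $\D(A)\hookrightarrow\hookrightarrow V$, but that estimate is not among those established in the paper).
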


\begin{proof}
Fix $t>0$ and let $\{v_n\}_{n=1}^\infty$ be bounded in $V$. It will suffice to show that $\{S(t)v_n\}_{n=1}^\infty$ has a convergent subsequence in $V$. We have shown the set of paths $\{A^{\frac{1}{2}}S(\cdot)v_n\}_{n=1}^\infty$ is a subset of $\H$ (see (\ref{helpful})). We will now show that this set of paths is in fact bounded in $\H$, i.e.
\begin{equation}
\label{tobeshown}
     \norm{A^{\frac{1}{2}}S(\cdot)v_n}_{L^2(0,T;V)}\leq C \quad \text{ and } \quad 
     \norm{\partial_t\left(A^{\frac{1}{2}}S(\cdot)v_n\right)}_{L^2(0,T;V')}\leq C ,
\end{equation}
where $C\geq0$ is independent of $n$. 

By (\ref{piece2}) we get that $\norm{S(\cdot)v_n}_{L^2(0,T;V)}\leq C$ where $C\geq0$ is independent of $n$. Then, by the same argument used in Section \ref{ConclusionOfProof} we get (c.f. (\ref{boundsupnorm})) that $\sup_{[0,T]}\norm{S(t)v_n}_V\leq C$. Combining these estimates with (\ref{IneqK}) we get that $\int_0^T\norm{\nabla\partial_z S(t)v_n}_{L^2}^2 dt\leq C$. Finally, combining all these bounds with (\ref{Gronwall}) we get the first bound of (\ref{tobeshown}). The second bound of (\ref{tobeshown}) then follows from (\ref{helpful2}).

Boundedness in $\H$ along with Lemma \ref{Aubin} implies that, after passing to a subsequence, $\{A^{\frac{1}{2}}S(\cdot)v_n\}_{n=1}^\infty$ converges in $L^2(0,T;H)$. It follows that $\{S(\cdot)v_n\}_{n=1}^\infty$ converges in $L^2(0,T;V)$ to some limit, which we call $v^*(\cdot)$. It then follows that, after passing to a further subsequence, $S(\tau)v_n\to v^*(\tau)$ in $V$ for almost every $\tau\in[0,T]$.

Since $t>0$, we can find a $\tau^*<t$ such that $S(\tau^*)v_n\to v^*(\tau^*)$. Then by the semigroup property for $S(t)$ and continuity (see Theorem \ref{appendixcont}) we have that
$$
S(t)v_n=S(t-\tau^*)S(\tau^*)v_n \to S(t-\tau^*)v^*(\tau^*),\text{ in }V.
$$
Thus, we have shown that (a subsequence of) $\{S(t)v_n\}_{n=1}^\infty$ converges in $V$.

\end{proof}

\bibliography{Bibliography}{}
\bibliographystyle{plain}

\end{document}